\def\endthebibliography{%
  \def\@noitemerr{\@latex@warning{Empty `thebibliography' environment}}%
  \endlist
}
\newcommand{\com}[1]{{\color{black}#1}} 
\newcommand{\com}[1]{}
\newtheorem{lemma}{Lemma}
\newtheorem{theorem}{Theorem}
\newtheorem{proof}{Proof}
\newtheorem{remark}{Remark}
\newcommand{\RNum}[1]{\uppercase\expandafter{\romannumeral #1\relax}}
\def\BibTeX{{\rm B\kern-.05em{\sc i\kern-.025em b}\kern-.08em
    T\kern-.1667em\lower.7ex\hbox{E}\kern-.125emX}}
\begin{document}
\font\myfont=cmr12 at 20pt
\title{{\myfont Two-Stage Resource Allocation in Reconfigurable Intelligent Surface Assisted Hybrid Networks via Multi-Player Bandits}\\
\thanks{This work was supported in part by the National Natural Science Foundation of China under Grant 61771017;
in part by the US National Science Foundation under Grants CNS-2107216 and CNS-2128368.
L. Fu is the \emph{corresponding author} (e-mail: liqun@xmu.edu.cn)}
\thanks{J. Tong and L. Fu are with the Department of Information and Communication Engineering, Xiamen University, Xiamen 361005, China (e-mails: tongjingwen@stu.xmu.edu.cn; liqun@xmu.edu.cn).}
\thanks{H. Zhang is with the Department of Electrical and Computer Engineering, Princeton University, NJ, USA (e-mail: hongliang.zhang92@gmail.com).}
\thanks{A. Leshem is with the  Faculty of Engineering, Bar-Ilan University, Ramat Gan 52900, Israel (e-mail: leshema@biu.ac.il).}
\thanks{Z. Han is with the Department of Electrical and Computer Engineering at the University of Houston, Houston, TX 77004 USA, and also with the Department of Computer Science and Engineering, Kyung Hee University, Seoul, South Korea, 446-701, USA (e-mail: zhan2@uh.edu).}
}
\author{\IEEEauthorblockN{Jingwen Tong, ~\IEEEmembership{Student Member,~IEEE,}
Hongliang Zhang, ~\IEEEmembership{Member,~IEEE,}
Liqun Fu, ~\IEEEmembership{Senior Member,~IEEE,}
Amir Leshem, ~\IEEEmembership{Senior Member,~IEEE,} and
Zhu Han, ~\IEEEmembership{Fellow,~IEEE}}}

\maketitle
\begin{abstract}
This paper considers a resource allocation problem where several Internet-of-Things (IoT) devices send data to a base station (BS)
with or without the help of the reconfigurable intelligent surface (RIS) assisted cellular network.
The objective is to maximize the sum rate of all IoT devices by finding the optimal RIS and spreading factor (SF) for each device.
Since these IoT devices lack prior information on the RISs or the channel state information (CSI),
a distributed resource allocation framework with low complexity and learning features is required to achieve this goal.
Therefore, we model this problem as a two-stage multi-player multi-armed bandit (MPMAB) framework to learn the optimal RIS and SF sequentially.
Then, we put forth an exploration and exploitation boosting (E2Boost) algorithm to solve this two-stage MPMAB problem by combining the $\epsilon$-greedy algorithm, Thompson sampling (TS) algorithm, and non-cooperation game method.
We derive an upper regret bound for the proposed algorithm, i.e., $\mathcal{O}(\log^{1+\delta}_2 T)$, increasing logarithmically with the time horizon $T$.
Numerical results show that the E2Boost algorithm has the best performance among the existing methods and exhibits a fast convergence rate.
More importantly, the proposed algorithm is not sensitive to the number of combinations of the RISs and SFs thanks to the two-stage allocation mechanism,
which can benefit high-density networks.
\end{abstract}

\begin{IEEEkeywords}
Reconfigurable intelligent surface (RIS), Internet-of-Things (IoT), multi-player multi-armed bandit (MPMAB), Thompson sampling (TS), exploration and exploitation boosting (E2Boost) algorithm.
\end{IEEEkeywords}

\section{Introduction}
Reconfigurable intelligent surface (RIS), which enhances the communication quality by adjusting the amplitude and the phase shift of the incident signal on a 2D planar surface with massive low-cost passive reflecting elements, has drawn increasing attention in future communication networks \cite{wu2019towards, di2020smart, MohamedRISwirelesscommun}.
There have existed some works accounting for this vision by studying the performance of the RIS-assisted cellular network \cite{zhang2020reconfigurable, di2019hybrid}, RIS-assisted unmanned aerial vehicle network \cite{li2020reconfigurable}, and RIS-assisted secure wireless communications \cite{HonglinagBook}.

Meanwhile, the cellular Internet-of-Things (C-IoT) with RIS is regarded as one of the paradigms in future communication networks, providing the capabilities of low-cost, large-scale, and ultra-durable connectivity for everything \cite{liberg2017cellular,qi2018wireless, dama2016feasible}.
By employing the LoRa (short for Long Range) technology,
C-IoT can operate on the unlicensed band since the resulting signal has substantial anti-interference properties \cite{elsaadany2017cellular}.
On the other hand, C-IoT can achieve the rate adaptation by employing the chirp spreading spectrum modulation at the physical layer with different spreading factors (SFs) \cite{waret2018lora}.
However, the study of the network-level performance of these C-IoT devices in the {RIS-assisted} hybrid cellular network still needs more {research}.

In light of this, we consider a hybrid uplink {network} where several C-IoT devices transmit data to a base station (BS)
by opportunistically accessing the RIS-assisted cellular network.
The goal is to maximize the sum rate of all C-IoT devices by finding the best RIS and SF for each device.
Although these C-IoT devices can directly send data to the BS, a higher SF that corresponds to a lower data rate will be assigned to combat the harsh channel environment or to enable a long-range transmission \cite{lyu2019achieving}.
As pointed out in \cite{qi2018wireless}, the low data rate will result in high data latency and security problems.
Therefore, these C-IoT devices may opportunistically access the vacant RISs to improve their data rate by reflecting their signal to the BS.
However, finding the optimal RIS and collecting the exact channel state information (CSI) is challenging for these C-IoT devices.
On the one hand,
the C-IoT device has no information (e.g., the phase shifts) about the RISs since they are deployed for cellular users (UEs).
On the other hand, {there is no communication among C-IoT devices} in such a distributed network.
These features render most traditional optimization methods infeasible in this resource allocation problem, such as the convex optimization methods \cite{boyd2004convex} and the combinatorial optimization methods \cite{papadimitriou1998combinatorial}.

To overcome the above impediments, the learning theory has been considered in \cite{liberg2017cellular, elsaadany2017cellular, nasir2019multi, gu2020deep, sutton2018reinforcement}  to address this problem by sequentially exploring all actions and automatically exploiting the best action.
Refs. \cite{nasir2019multi, gu2020deep} study the distributed resource allocation problem in wireless networks by formulating it as a Markov decision processing (MDP) {problem}.
Then, the authors propose the multi-agent reinforcement learning (RL) based method to solve this MDP problem.
Unfortunately, these solutions often suffer from the issues of the curse of dimensionality, lack of performance guarantee (\textit{e.g.}, the unknown convergence rate), and high computational complexity \cite{sutton2018reinforcement}.
As pointed out in \cite{liberg2017cellular, elsaadany2017cellular}, low complexity and fast convergence resource allocation algorithms are crucial for energy-constrained IoT devices in future communication networks.

This inspires us to consider the multi-armed bandit (MAB) technique.
MAB is a basic framework for the sequential decision-making problem \cite{bubeck2012regret}.
In the classic MAB setting, in each round, a decision-maker (or player) selects an arm from a set of arms (or arm space) with an unknown distribution.
After that, the player will observe a reward from the environment (or the unknown distribution).
The goal is to minimize the pseudo-regret that is defined as the difference between the mean rewards of the optimal arm and the currently selected arm.
During this process, the player faces an exploration and exploitation (EE) dilemma.
On the one hand, the player needs to explore the arm space sufficiently to ensure its long-term performance (i.e., not miss the optimal arm);
on the other hand, it needs to exploit the current best arm as many times as possible to maximize its total rewards.
Compared with the other learning-based methods, MAB has a theoretical guarantee (i.e., regret bound) and low computational complexity, and it is easy to implement.

Recently, the multi-player MAB (MPMAB) framework has gained much attention in wireless communications \cite{ta2019lora01, tibrewal2019distributed, zafaruddin2019distributed, bistritz2018game}.
Ref. \cite{ta2019lora01} studies the SF allocation problem in the LoRa network by devising a fully distributed MPMAB framework.
It solves this MPMAB problem by using the Exponential-weight algorithm for Exploration and Exploitation (Exp3) \cite{Exp3} algorithm.
However, the solution of the Exp3 algorithm is selfish in that it cannot guarantee the optimal allocation for each device.
{The optimal MPMAB framework} is considered in \cite{tibrewal2019distributed},
where different players contend for the same set of channels in an ad-hoc network.
Based on the Hungarian algorithm \cite{jonker1986improving}, the authors propose a probably approximately correct (PAC) based MPMAB algorithm to estimate the CSI matrix sequentially.
However, {the PAC-based MPMAB} algorithm requires players to exchange messages, leading to extra signaling in the system.
The fully distributed resource allocation framework with the optimal solution is investigated in \cite{zafaruddin2019distributed} and \cite{bistritz2018game}.
Ref. \cite{zafaruddin2019distributed} aims to maximize the sum rate of all users by combining the MAB algorithm and the auction algorithm.
A more general version of the distributed MPMAB framework named the game-of-thrones (GoT) algorithm has been proposed in \cite{bistritz2018game}.
The authors intend to find the optimal assignment for each player by combining the {MAB algorithm} and the game theory.
However, algorithms in \cite{zafaruddin2019distributed} and \cite{bistritz2018game} suffer from low convergence rate,
especially when the arm space is large.

In this paper, we propose a two-stage MPMAB framework to {attack} this resource allocation problem in the hybrid uplink network.
In this two-stage MPMAB framework, players are the IoT devices;
{arms are the RISs in the first stage and the SFs in the second stage, respectively.}
We assume that two or more players who select the same RIS will observe a collision and receive zero reward.
This resource allocation problem is quite different from that in \cite{tibrewal2019distributed} and \cite{zafaruddin2019distributed},
because it not only needs to learn the CSI but also the phase shifts of the RISs.
Moreover, the ascending order in the set of SFs and the corresponding descending order in the successful transmission probabilities enable us to devise a two-stage MPMAB framework.
To address this two-stage MPMAB problem, we put forth an exploration and exploitation boosting (E2Boost) algorithm by combining the game theory and the {MAB algorithm}.
The E2Boost algorithm proceeds in epochs and has three phases, i.e., $\epsilon$-greedy EE phase,  non-cooperation game phase, and  Thompson sampling (TS) EE phase.
Each phase contains a specific mechanism to trade off the EE dilemma, which is why we call it the E2Boost algorithm.
{In addition, we derive an upper pseudo-regret bound for the E2Boost algorithm, i.e., $\mathcal{O}(\log^{1+\delta}_2 T)$ where $0\leq\delta<1$, indicating that the per-round regret will trend to $0$ when the time horizon $T$ is sufficiently large.
More importantly, this upper regret bound is about $M$ times lower than that in the GoT algorithm, where $M$ is the number of SFs.
In other words, the proposed algorithm is not sensitive to the number of combinations of the RISs and SFs, which can benefit high-density networks.}

\begin{figure}[!t]
\centering
\includegraphics[width=2.0in]{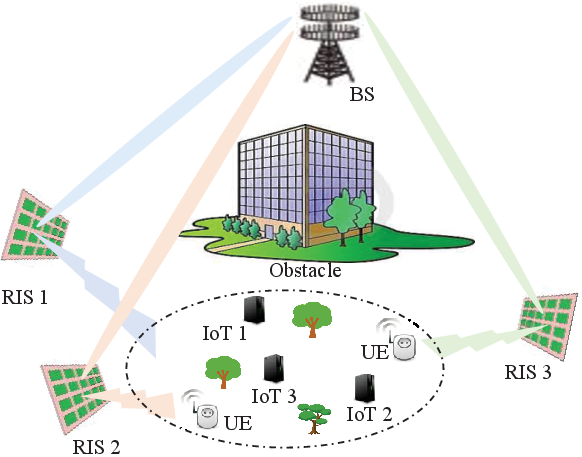}
\caption{A RIS-assisted hybrid uplink network.}
\label{NC}
\end{figure}
{The difference between this work and the existing ones and the main contributions of this work are summarized as follows.}
\begin{itemize}
  \item The E2Boost algorithm embeds the $\epsilon$-greedy algorithm \cite{auer2002finite} in the first phase to reduce the regrets generated from the uniform exploration.
  Specifically, we use the Wasserstein distance (WD) \cite{arras2019bound} to measure the convergence rate of the second phase.
  In return, this measurement is regarded as a criterion to optimize the parameter $\epsilon$.
  \item The E2Boost algorithm adopts the TS algorithm \cite{gupta2018low, agrawal2013further} in the third phase to determine the best SF.
  Since the only observed information is the success or failure transmission feedback, the TS algorithm maintains a Beta distribution on the successful transmission probability of each SF.
  For the Bernoulli reward processing, the TS algorithm accounts for the best performance among the existing stochastic MAB algorithms \cite{agrawal2013further}.
  \item The E2Boost algorithm has a smaller arm space to explore than the GoT algorithm.
  Thanks to the two-stage allocation mechanism, the E2Boost algorithm only needs to explore the sets of the RISs or the SFs. In contrast, the GoT algorithm requires exploring the combinations of the RISs and SFs.
\end{itemize}

The remainder of this paper is organized as follows.
In Section \RNum{2}, we introduce the channel model and the achievable data rate.
The problem formulation is given in Section \RNum{3}.
In Section \RNum{4}, we {introduce} the two-stage MPMAB framework for this joint resource assignment problem.
The E2Boost algorithm is presented in Section \RNum{5}.
Numerical results are given in Section \RNum{6} to evaluate the proposed algorithm.
This paper is concluded in Section \RNum{7}.

\section{System Model}\label{SM}
\setlength{\textfloatsep}{5pt}
We consider a hybrid uplink cellular network, as shown in Fig. \ref{NC},
where several UEs and $N$ IoT devices are located in an area.
Both UE and IoT device need to transmit data to the BS.
Since there may exist some obstacles (\emph{e.g.}, buildings) between UEs and the BS, the signal will experience deep-fading.
Thus, $K$ RISs are deployed to reflect the UEs' signals to the BS by adjusting the RISs' phase shifts.
These RISs are operated over different frequencies\footnote{The RIS can operate at different frequencies by changing the location and the wave-number of each element \cite{di2019hybrid}.}.
The $N$ IoT device has no information about these RISs,
but it may opportunistically access these RISs to improve its data rate.
In this hybrid network, UE is the legal user to communicate with the BS through the RIS;
while the IoT device is required to perform spectrum sensing\footnote{If the received signal strength (RSS) exceeds a threshold, the IoT device marks this RIS with the busy state; otherwise, the state of the RIS is idle.} before access to the vacant RIS.
Time is slotted in $t=1,2,\ldots,T$.
At each time slot, we assume that one RIS can serve multiple UEs but can be exploited by only one IoT device.
{The reason is that the BS requires the precoding and beamforming vectors to maintain communication quality in this multi-user RIS-assisted system \cite{you2021reconfigurable}.
These vectors often contain the information of the CSI and the RISs' phase shifts determined by the UEs.
As a result, an RIS can only support one IoT device since the IoT device lacks these precoding and beamforming vectors.}

\subsection{Channel Models}
There are two transmission patterns for each IoT device {in} this hybrid network.
The first one {is} the RIS-assisted transmission pattern (Pattern \RNum{1}), {where} the IoT device transmits to the BS through the RIS when the target RIS is detected in an idle state.
The second one {is} non-RIS-assisted transmission pattern (Pattern \RNum{2}), {where} the IoT device directly transmits to the BS with a low data rate if the target RIS is detected in a busy state.

\textbf{Pattern \RNum{1}:}
Assume that each element on the RIS is equipped with $b$ PIN diodes, producing $2^b$ phase shifts in $[0, 2\pi)$  by controlling the ON/OFF state of each diode.
Hence, the available phase shift at the $(l_1, l_2)$-th element is
\begin{equation}\label{phase-shift}
\tau_{l_1, l_2} =  \frac{\pi  \rho_{l_1, l_2} }{2^{b-1}},
\end{equation}
where $(l_1, l_2)$ is the index of the RIS elements' matrix and $\rho_{l_1, l_2}$ is an integer in $[0, 2^b-1]$.
Let $A_{l_1, l_2}$ be the reflection factor at the $l_1$-th row and $l_2$-th column of  the RIS elements' matrix, which is defined as
\begin{equation}\label{ReflectionFactor}
A_{l_1,l_2} = A e^{-j\tau_{l_1, l_2}},
\end{equation}
where $A$ is a reflection amplitude with a constant value among $(0, 1]$\footnote{The reflection amplitude can be a function of the phase shift as in \cite{abeywickrama2020intelligent}.}.

By taking advantage of the directional reflections of the RIS,
the BS - RIS - IoT device link is usually stronger than other multi-paths as well as the deep-fading direct link between the BS and the IoT device \cite{HonglinagBook}.
Therefore, we model the channel between the BS and the IoT device as a Ricean model.
In this way, the {BS - (RIS $k$) - (IoT device $n$)} link acts as the dominant ``LoS'' component;
while all the other paths together form the ``non-LoS (NLoS)'' component.
Hence, the RIS-assisted channel model $h^{n,k}_{l_1,l_2}$ is defined as
\begin{equation}\label{ChannelModel}
h^{n,k}_{l_1,l_2} =\sqrt{\frac{\zeta}{\zeta+1}} \tilde{h}^{n,k}_{l_1,l_2} + \sqrt{\frac{1}{\zeta+1}} \hat{h}^{n,k}_{l_1,l_2},
\end{equation}
where $\tilde{h}^{n,k}_{l_1,l_2}$ and $\hat{h}^{n,k}_{l_1,l_2}$ are the  LoS component and the NLoS component with the $k$-th RIS and the $n$-th IoT device through the $(l_1,l_2)$-th element, respectively.
Symbol $\zeta$ is the Rician factor, indicating the ratio of the LoS component to the NLoS component.
In the following, we omit the IoT device index $n$ and the RIS index $k$ in the superscript if no confusion occurs.

Let $D_{l_1,l_2}$ be the distance between the BS and the $(l_1, l_2)$-th RIS element,
and let $d_{l_1,l_2}$ be the distance between the {$(l_1, l_2)$-th RIS element} and the IoT device.
The transmission distance of {BS - ($(l_1, l_2)$-th RIS element) - (IoT device $n$)} link is $L_{l_1,l_2} = D_{l_1,l_2}+d_{l_1,l_2}$.
According to \cite{zhang2020reconfigurable}, {the LoS component of this link} is given by
\begin{equation}\label{ChannelModel01}\small
\begin{split}
\tilde{h}_{l_1,l_2}
&= \sqrt{G D^{-\iota}_{l_1,l_2} d^{-\iota}_{l_1,l_2}} e^{-j \frac{2\pi}{\lambda} L_{l_1, l_2}}\\
&= \sqrt{G}\left[\sqrt{D^{-\iota}_{l_1,l_2}} e^{-j \frac{2\pi}{\lambda} D_{l_1, l_2}} \right] \left[ \sqrt{d^{-\iota}_{l_1,l_2}} e^{-j \frac{2\pi}{\lambda} d_{l_1, l_2}} \right],
\end{split}
\end{equation}
where $\iota$ is the path-loss exponent.
Symbol $G$ is the antenna gain, and $\lambda$ is the wavelength of the signal.
Meanwhile, the NLoS component is given by
\begin{equation}\label{ChannelModel02}
\hat{h}_{l_1,l_2} = \sqrt{PL_{\mathrm{NLoS}}(L_{l_1,l_2})}g_{l_1,l_2},
\end{equation}
where $g_{l_1,l_2}$ is the small-scale NLoS component, following the i.i.d. complex Gaussian distribution, i.e., $g_{l_1,l_2} \sim \mathcal{CN}(0,1)$.
Term $PL_{\mathrm{NLoS}}(\cdot)$ is the NLoS channel power gain that we adopt the urban macro (UMa) path-loss model\footnote{The calculation of $PL_{\mathrm{NLoS}}(\cdot)$ in dB form is $10 \log_{10} PL_{\mathrm{NLoS}}(d)  = 13.54 + 39.08\log_{10} (d) + 20\log_{10}(f_c) - 0.6(h_{\mathrm{IoT}}-1.5)$, where $d$ is the Euclidean distance between the device and the BS, and $h$ is the height of the device. Symbol $f_c$ is the central frequency.}  \cite{3GPP} in the simulation.

\textbf{Pattern \RNum{2}:}
In the non-RIS-assisted transmission pattern, the IoT device directly transmits to the BS without the help of the RIS.
Since there are some obstacles, the signal may {experience} deep fading.
{Thus, we use the shadow fading model to describe the channel between IoT device $n$ and the BS \cite{tong2018cooperative}, i.e.,}
\begin{equation}\label{ChannelModel03}
h_n = \sqrt{\varrho_n} g_n,
\end{equation}
where $\varrho_n$ is the channel power gain, following the i.i.d. log-normal distribution with mean $\mu_n$ and standard deviation $\sigma_n$ of $\ln \left(\varrho_n\right)$.
The typical value of $\sigma_n$ is between $6$ and 12 dB for practical radio channels \cite{3GPP}.
In addition, $g_n$ is a small-scale NLoS component, following i.i.d. complex Gaussian distribution, i.e., $g_n \sim \mathcal{CN}(0,1)$.

\subsection{Signal Model and Achievable Data Rate}
The received signal from IoT device $n$ to the BS through RIS $k$ is given by
\begin{equation}\label{SignalModel}
\left\{\begin{array}{lll}
\digamma_{n,k} = \sum_{l_1,l_2} A^k_{l_1,l_2} h^{n,k}_{l_1,l_2} \sqrt{\Omega_{n}} x + y + \omega,  & \mathrm{Pattern \ \RNum{1}}, \\
\digamma_{n} =  h_{n} \sqrt{\Omega_{n}} x + y + \omega, & \mathrm{Pattern \ \RNum{2}},
\end{array}\right.
\end{equation}
where $x$ is the transmission signal with $|x|^2=1$ and $y$ is the received interference signal\footnote{Notice that the interference $y$ may come from the neighboring cellular networks or the local UEs when the UEs' signals are missed detection by the IoT devices.} which can be modeled as a log-normal distribution \cite{tong2017energy}, i.e., $y \sim Log\mathcal{N}(\mu_y,\sigma^2_y)$.
In addition, $\omega \sim \mathcal{CN}(0, \sigma_{\omega}^2)$ is the i.i.d. additive complex Gaussian noise and $\Omega_n$ is the transmit power.
Then, the received signal-to-interference-plus-noise ratio (SINR) {can be calculated by}
\begin{equation}\label{SNR}\small
\left\{\begin{array}{lll}
\gamma_{n,k} &= \frac{\Omega_n \left(\sum_{l_1,l_2}A^k_{l_1,l_2} \tilde{h}^{n,k}_{l_1,l_2} \sum_{l_1,l_2} \left(A^{k}_{l_1,l_2}\right)^{\ast} \left(\tilde{h}^{n,k}_{l_1,l_2}\right)^{\ast} \right) }{\exp\left(2\mu_y + 2\sigma^2_y \right) + \sigma_{\omega}^2}, &  \\
\gamma_{n} &= \frac{\Omega_n h_n h^{\ast}_n }{\exp\left(2\mu_y + 2\sigma^2_y \right) + \sigma_{\omega}^2}, &
\end{array}\right.
\end{equation}
where $(\cdot)^{\ast}$ is the conjugate operation.

In a practical system, each IoT device can only support a finite number of data rates according to the available SFs.
Let $\mathcal{M} = \{c_1, c_2, \cdots, c_M\}$ and $\mathcal{S}=\{s_1, s_2, \cdots, s_M \}$ be the set of data rates and SFs, respectively.
According to \cite{waret2018lora}, the relationship between data rate and SF is given by
\begin{equation}\label{SF}
c_m = \frac{B s_m }{2^{s_m}}\times CR,
\end{equation}
where $B$ is the bandwidth in Hz and $CR$ is the code rate.
It can be seen that a higher SF is associated with a lower data rate.
In other words, if $\mathcal{S}$ is in ascending order $s_1<s_2<\cdots<s_M$, $\mathcal{M}$ will be the descending order $c_1>c_2>\cdots>c_M$.

The achievable data rate not only corresponds to the selected modulation and coding scheme but also depends on the received SINR \cite{VL1}.
Thus, according to the instantaneous received SINR,
the successful transmission probability of data rate $c_m$ is given by
\begin{equation}\label{Succ}
\left\{\begin{array}{lll}
\theta_{k, c_m}^{n} &\triangleq \text{Pr}\{\gamma'_{n,k} \geq \Psi_m \}, & \mathrm{Pattern \ \RNum{1}}, \\
\theta_{c_m}^{n} &\triangleq \text{Pr}\{\gamma'_{n} \geq \Psi_m \}, & \mathrm{Pattern \ \RNum{2}},
\end{array}\right.
\end{equation}
where $\Psi_m$ is the minimum required SINR for the BS to demodulate the received signal {when the data rate is $c_m$}.
Note that SINR $\gamma'_{n,k}$ (or $\gamma'_{n}$) is a random variable with mean $\gamma_{n,k}$ (or $\gamma_{n}$) due to the small-scale NLoS components and the received interference signal.
According to the Shannon formula, {the better received SINR, the higher the successful transmission probability when given a data rate.
In other words, a descending data rates ($c_1>c_2>\cdots>c_M$) will lead to an ascending successful transmission probabilities ($\theta_{c_1}<\theta_{c_2}<\cdots<\theta_{c_M}$).}

\section{Problem Formulation}\label{PF}
The system's goal is to maximize the sum rate of all IoT devices at each time slot by finding the optimal RIS and SF for each device
under Pattern \RNum{1}, as well as determining the optimal SF for each device under Pattern \RNum{2}.
Let $\vec{\vartheta}^t=\{\vartheta^t_1, \vartheta^t_2, \ldots, \vartheta^t_K \}$ be the state vector of the RISs at time slot $t$,
{where} $\vartheta^t_k=1$ means that the $k$-th RIS is vacant; otherwise, it is occupied.
{Note that this information is known prior to each IoT device with the spectrum sensing operation.}
Then, the resource allocation problem is given by
\begin{equation}\label{DSO}\small
\begin{aligned}
& \underset{}{\max\limits_{\phi^n_{k, c_m}, \psi^n_{c_m}}}
& & \sum_{t=1}^{T}\sum_{n=1}^{N} \sum_{m=1}^{M} \left( \underbrace{\sum_{k=1}^{K} c_m \vartheta^t_{k}  \theta^n_{k, c_m} \phi^n_{k, c_m}}\limits_{\mathrm{Pattern \ \RNum{1}}}
+ \underbrace{ c_m  \theta^n_{c_m} \psi^n_{c_m}}\limits_{\mathrm{Pattern \ \RNum{2}}}\right) \\
& \mathrm{s.t.}
& & \sum_{m=1}^{M} \sum_{k=1}^{K} \vartheta^t_{k} \phi^n_{k, c_m}  +  \sum_{m=1}^{M} \psi^n_{c_m} =1,  \ \forall n \in \mathcal{N},  \\
& \mathrm{\quad}
& & \sum_{n=1}^{N} \phi^n_{k, c_m} \leq 1, \  \forall c_m \in \mathcal{M}, \ \text{and} \ \forall k \in \mathcal{K},
\end{aligned}
\end{equation}
where $\phi^n_{k, c_m}$ and $\psi^n_{c_m}$ are the binary variables,
where $\phi^n_{k, c_m} =1$ denotes that IoT device $n$ transmits on the $k$-th RIS with SF $s_m$ to reflect its signal to the BS; otherwise, $\phi^n_{k, c_m} =0$.
The symbol $\psi^n_{c_m} =1$ denotes that IoT device $n$ directly transmits to the BS with SF $s_m$; otherwise, $\psi^n_{c_m} =0$.
Thus, the first constraint indicates that each IoT device either transmits on Pattern \RNum{1} or Pattern \RNum{2}.
If IoT device $n$ transmits on Pattern \RNum{1}, then  $\sum_{m=1}^{M} \sum_{k=1}^{K} \vartheta^t_{k} \phi^n_{k, c_m} =1$ means that each IoT device can only select a {pair of RIS and SF};
if IoT device $n$ transmits on Pattern \RNum{2}, then $\sum_{m=1}^{M} \psi^n_{c_m} =1$ denotes that each IoT device {can only} select a SF.
The second constraint means that the number of IoT devices that select the $k$-RIS and the $m$-th SF is subject to $0$ or $1$.
In addition, $\mathcal{N} = \{1, 2, \cdots, N\}$ and $\mathcal{K} = \{1,2,\cdots, K \}$ are the sets of IoT devices and RISs, respectively.
The symbol $\theta^n_{k, c_m}$ is the successful transmission probability that IoT device $n$ transmits on the $k$-th RIS and the $m$-th SF;
while $\theta^n_{c_m}$ is the successful transmission probability that IoT device $n$ directly sends data to the BS with SF $m$.

It is difficult to solve problem \eqref{DSO} in this distributed hybrid network, especially in Pattern \RNum{1}.
First, $c_m$ and $\theta^n_{k, c_m}$ are discrete values\footnote{According to \eqref{SF}, $c_m$ is discrete since the number of SFs is limited in practice. In addition, according to \eqref{SNR} and \eqref{Succ}, $\theta^n_{k, c_m}$ is a function of the RIS's phase shifts, which are discrete values in range $[0, 2\pi)$.},  resulting in a non-convex problem.
Second, it requires the exact value of $\theta^n_{k, c_m}$.
This information is difficult to obtain since the channel characteristic is determined by the UE-controlled RISs.
Third, it needs some communications among IoT devices to share the information of $\theta^n_{k, c_m}$ {so as to} determine the optimal available RIS for each IoT device.
In addition, {problem \eqref{DSO} in Pattern \RNum{2}} can be regarded as a rate adaptation problem \cite{Gupta2018Low-complexity} since the SF allocation is independent for each IoT device,
{but} it still requires the exact CSI (or the value of $\theta^n_{c_m}$), which is hard to estimate from the time-varying channel.

To overcome these challenges, we adopt the online learning method to learn the values of $\theta^n_{c_m}$ and $\theta^n_{k, c_m}$ sequentially and to allocate the optimal RIS and SF to each IoT device adaptively.
During this process, the IoT device not only needs to explore the combinations of the RISs and SFs sufficiently
but also needs to exploit the current best RIS and SF as many times as possible at each time slot.
To better tradeoff this EE dilemma, we introduce the MPMAB framework to solve this problem,
{where players are the IoT devices, and arms are the combinations of the RISs and SFs.}

However, the MPMAB framework still suffers from a slow convergence rate due to the large arm space (i.e., the combinations of the RISs and SFs) under Pattern \RNum{1}.
Therefore, we decouple the MPMAB problem into a two-stage MPMAB framework to shrink the feasible arm space.
{The reason is that, on the one hand,} descending data rates will result in ascending successful transmission probabilities;
{on the other hand,} an IoT device with different data rates will experience the same channel fading under a particular RIS.
These features indicate that the average successful transmission probabilities of the ordered data rates over different RISs have the same trend.
Therefore, we can explore these RISs by arbitrarily assigning a data rate to the IoT device.
In other words, the SF allocation and the RIS allocation processes are independent of each other under Pattern \RNum{1}.

\section{Two-Stage MPMAB-based Resource Allocation Framework}\label{TSMF}
In this two-stage MPMAB framework, players are the IoT devices;
arms are the RISs and the SFs in the first and second stages, respectively.
The first-stage MPMAB problem is to determine the best RIS for each IoT device;
while the second-stage MPMAB problem is to find the optimal SF based on the state of the {determined} RIS.

\subsection{First-Stage MPMAB Framework}
{We first introduce the transmission feedback model and the collision model.}
In the transmission feedback model, the IoT device can receive the transmission feedback from the BS {when} it transmits on Pattern \RNum{1} or Pattern \RNum{2}.
Specifically, let $I'_{n,t}$ be the selected arm by the $n$-th IoT device at time slot $t$.
After transmitting on the $I'_{n,t}$-th \com{arm}, the IoT device $n$ will receive transmission feedback $X_{I'_{n,t}}(t)$ from the BS.
If the transmission is successful, \com{then} $X_{I'_{n,t}}(t)=1$; otherwise, $X_{I'_{n,t}}(t)=0$.

{The collision model only exists in the first stage,} referring to that two or more IoT devices that choose the same RIS will receive no rewards.
We assume that each IoT device can deduce this collision information by observing the timeout feedback flag.
Specifically, let $\eta$ be the collision indicator.
If an IoT device does not receive any feedback from the BS in the current time slot, then a collision happens, i.e., $\eta=0$; otherwise, $\eta=1$.
Therefore, IoT devices can distinguish the collision and the transmission failure events by checking whether or not they receive transmission feedback from the BS.
Moreover, this collision model also works in some extreme situations.
{For example, when} the received SINR in Pattern \RNum{1} is too low to be recognized by the BS,
the IoT device can always set $\eta=0$ (i.e., the reward is $0$) since the target RIS is suboptimal to it.

Denote $\boldsymbol{I}'_t = \{I'_{1,t},I^{'}_{2,t}\cdots, I'_{N,t}\}$ by the strategy profile at time slot $t$.
The collision indicator of RIS $k$ is defined as
\begin{equation}\label{Indicator}
\renewcommand{\arraystretch}{1.3}
\begin{split}
\eta_k \left( \boldsymbol{I}'_t \right) = \left\{
\begin{array}{ll}
0, & |\mathcal{N}_k| > 1, \\
1, & \text{otherwise},
\end{array} \right.
\end{split}
\end{equation}
where $\mathcal{N}_k$ is the set of players that select the $k$-th RIS in strategy profile $\boldsymbol{I}'_t$.
The reward that IoT device $n$ transmits on the $k$-th RIS is given by
\begin{equation}\label{Reward01}
r_{n, I'_{n,t}=k} (t) \triangleq  \eta_k \left(\boldsymbol{I}'_t\right) X_{I'_{n,t}=k}(t).
\end{equation}
Then, the estimated average successful transmission probability that the $n$-th IoT device transmits on the $k$-th RIS is given by
\begin{equation}\label{Ave_Reward01}
\begin{split}
\hat{\theta}_{n,k} &= \mathbb{E}\left[r_{n, I'_{n,t}=k} (t)\right],\\
\end{split}
\end{equation}
where $\mathbb{E}[\cdot]$ is the expectation operator.

\subsection{Second-Stage MPMAB Framework}
In this stage, each player has a targeted RIS {after} the first-stage allocation.
Thus, the IoT device transmits directly or on a targeted RIS to the BS at each time slot.
Note that there is no collision in this stage since two or more players can choose the same SF.
{Therefore, this stage can be regarded as a single-player MAB framework.}

Let $I''_{n,t}$ be the currently selected arm at the second-stage allocation.
The reward that the IoT device $n$ chooses the $m$-th data rate is defined as
\begin{equation}\label{Reward02}
r_{n, I^{''}_{n,t}=m} (t) \triangleq c_{m}  \eta_k \left(\boldsymbol{I}'_t\right) X_{I''_{n,t}=m}(t),
\end{equation}
where $c_m$ is the $m$-th data rate and $\boldsymbol{I}'_t$ is the strategy profile of all players' {target} RISs at the first stage.
The instantaneous rewards $r_{n, I''_{n,t}} (t)$ are independently and identically distributed w.r.t. player $n$ and time slot $t$.
Therefore, the estimated average reward is given by
\begin{equation}\label{Ave_Reward}
\hat{\mu}_{n,m} = \mathbb{E}\left[r_{n, I''_{n,t}=m} (t)\right] = c_m \hat{\theta}_{n,m},
\end{equation}
where $\hat{\theta}_{n,m}$ is the estimated average successful transmission probability that the $n$-th IoT device transmits on the $m$-th data rate.

\subsection{Performance Metric for the Two-Stage MPMAB Framework}
In the following, we design a criterion to quantify the performance loss that players select the suboptimal arms rather than the optimal arm in this two-stage MPMAB problem.
According to \eqref{DSO}, the objective function consists of Pattern \RNum{1} and Pattern \RNum{2}.
For Pattern \RNum{1}, we define the joint RIS and SF selection profile by $\boldsymbol{a} = \{a_{1}, a_{2}, \ldots, a_{N}\}$, {where $a_n \in \mathcal{K} \otimes \mathcal{M}$ and $\otimes$ is the Cartesian product of the RIS set and the data rate set.}
However, for Pattern \RNum{2}, the selection profile $\boldsymbol{a}$ is the set of data rates, i.e., $a_n\in \mathcal{M}$.
Therefore, the two-stage allocation aims to solve the following problem,
\begin{equation}\label{obj0}
\begin{split}
\boldsymbol{a}^{\ast} &= \text{arg} \max\limits_{\boldsymbol{a}} \sum\limits^N_{n=1} \hat{\mu}_{n,a_{n}}\\
&= \text{arg} \max\limits_{\boldsymbol{a}} \sum\limits^N_{n=1} \mathbb{E}\left[ c_{\boldsymbol{a}} \eta_k \left(\boldsymbol{a}\right) X_{\boldsymbol{a}}(t) \right],
\end{split}
\end{equation}
where $\boldsymbol{a}^{\ast} = \{a_1^\ast, a_2^\ast, \cdots, a_N^\ast\}$ is the optimal strategy profile.

Then, we define the difference between the optimal arm and the currently selected arm as the performance metric, also known as \emph{regret}.
According to \cite{bistritz2018game}, the expression of accumulated regrets is given by
\begin{equation}\label{Regret}
\mathcal{R}eg \triangleq  \sum_{t=1}^T \sum_{n=1}^{N} {r}_{n, a_n^{\ast}}(t) - \sum_{t=1}^T \sum_{n=1}^{N} {r}_{n,a_n} (t),
\end{equation}
where $a^\ast_n \in \boldsymbol{a}^\ast$ and $T$ is the total time slots.
For mathematical analysis, we {further} define the \emph{pseudo-regret} \cite{bubeck2012regret} \emph{w.r.t.} the stochastic rewards and the randomly selected arms as
\begin{equation}\label{ERegret}\small
\renewcommand{\arraystretch}{1.6}
\begin{split}
\overline{\mathcal{R}eg} &= \sum_{n=1}^{N} \left( T \times \mu_{n,a^\ast_n} -  \mathbb{E} \sum_{t=1}^T \mu_{n, a_n}  \right)\\
&= \left\{
\begin{array}{ll}
    \sum_{n=1}^{N} \sum_{i=1}^{K\times M} \Delta_{n,i} \mathbb{E} [W_{n,i}], & \mathrm{Pattern} \ \mathrm{\RNum{1}}, \\
    \sum_{n=1}^{N} \sum_{i=1}^{M} \Delta_{n,i} \mathbb{E} [W_{n,i}], & \mathrm{Pattern} \ \mathrm{\RNum{2}},
    \end{array} \right.
\end{split}
\end{equation}where $\Delta_{n,i} = \mu_{n,a^\ast_n} - \mu_{n,i}$ and $W_{n,i}$ is the number of times that arm $i$ has been selected up to time $T$.
Term $\mu_{n, i}$ is the real expected throughput of player $n$ at arm $i$.

\section{E2boost Algorithm}\label{PA}
In this section, we propose an E2Boost algorithm to solve this two-stage MPMAB problem by combining the game theory and the {MAB} algorithms.
The structure of the E2Boost algorithm is shown in Fig. \ref{AS}.
Since time horizon $T$ is unknown to each player,
the E2Boost algorithm proceeds in epochs (i.e., $z = 1, \cdots, Z$).
Each epoch consists of three phases: $\epsilon$-Greedy EE, non-cooperation game, and Thompson sampling EE phases.
Each phase contains several time slots and specific mechanisms to balance the EE dilemma.
\begin{figure}[!t]
\centering
\includegraphics[width=3.3in]{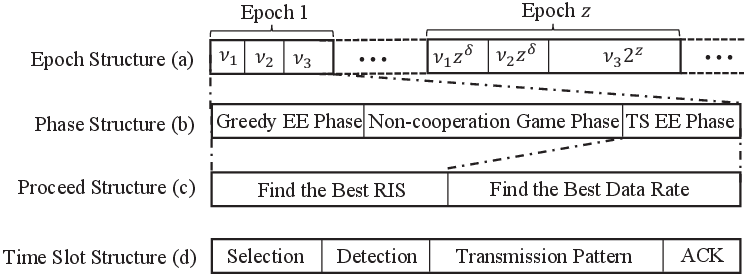}
\caption{The structure of the {E2Boost algorithm}.}
\label{AS}
\end{figure}

\subsection{The Exploration and Exploitation Boosting Algorithm}
The E2Boost algorithm is shown in Algorithm \ref{GoT}.
The first two phases are designed to find the optimal RIS for each IoT device by solving the first-stage MPMAB problem;
while the last phase is to determine the best SF by solving the second-stage MPMAB problem.
In the following, we elaborate on the above three phases in detail.
\begin{algorithm*}[t]
\caption{E2Boost Algorithm Run by Player $n$}
\label{GoT}
\begin{algorithmic}[1]
\State \textbf{Initialization:}  $\delta>0,  \varepsilon >0$ and $\nu, \nu_1, \nu_2, \nu_3 > 0$. Let $\epsilon =1, V_{n,k}(0) = 0, Q_{n,k}(0) = 0, \alpha_{n,m}(0) =0, \beta_{n,m}(0) = 0, \ \forall k \in \mathcal{K}$, \ $\forall m \in \mathcal{M}$.
\For {each epoch $z=1,2,\cdots$, Z}
    \State \textbf{i)  $\epsilon$-Greedy EE Phase:} \ For the next $\nu_1 z^{\delta}$ time slots.
        \State \quad a) Pick up a data rate $I'_{n,t}=m$ uniformly from set $\mathcal{M}$ if $z = 1$, otherwise $I'_{n,t} = c_n^{\ast}$;
        \State \quad b) Select a RIS $I'_{n,t}=k$ uniformly from set $\mathcal{K}$ with probability $\epsilon$ or $I'_{n,t} = k_n^\ast$ with probability $1 - \epsilon$;
        \State \quad c) Detect the selected RIS: jump to \textbf{Phase iii} if busy; otherwise, continue the following steps:
        \State \quad d) Observe the transmission feedback $X_{n, I'_{n, t}}$. Set $\eta_k = 0$ if timeout and $\eta_k = 1$ otherwise;
        \State \quad e) If $\eta_k = 1$ then update $V_{n, I'_{n, t}}(t) = V_{n, I'_{n, t}}(t-1) + 1$ and $Q_{n, I'_{n, t}} (t) = Q_{n, I'_{n, t}} (t-1) + X_{n, I'_{n, t}}(t) $;
        \State \quad At the end of this phase, compute the successful transmission probabilities of RISs by
        \begin{equation*}
        \hat{\theta}^z_{n, k} = \frac{Q_{n, k}}{V_{n, k}}, \quad \forall k \in \mathcal{K}.
        \end{equation*}
    \State \textbf{ii)  Non-cooperation Game Phase:} \ For the next $\nu_2 z^{\delta}$ time slots, play with the dynamics. Set $ST_n = C$, and let $\bar{k}$ be the last RIS chosen in the $z-\lfloor \frac{z}{2} \rfloor-1$ Game phase, or a random choice if $z = 1, 2$.
        \State \quad a) If $ST_n = C$ choose a RIS $I'_{n, t}$ using \eqref{Pr_content} and if $ST_n = D$ select $I'_{n, t}$ at random \eqref{Pr_discontent};
        \State \quad b) Detect the selected RIS: jump to \textbf{Phase iii} if busy;  otherwise continue the following steps:
        \State \quad c) If $I'_{n,t} \neq \bar{k}$ or $u_n = 0$ or $ST_n = D$ then set $ST_n = C \ \text{or} \ D$ according to \eqref{State_D};
        \State \quad d) Record the number of times each RIS \com{has} been selected within the content state using \eqref{Num};
        \State \quad e) Adjust parameter $\epsilon$ according to \emph{Lemma} \ref{Lemma01} when $z\geq 2$.
        \State \quad At the end of this phase, determine the current best RIS by
        \begin{equation*}
        k_n^{\ast} = \text{arg}  \max\limits_{k \in \mathcal{K}} \ \sum_{j = 0}^{\lfloor\frac{z}{2}\rfloor} F^{z-j}_n(k).
        \end{equation*}
    \State \textbf{iii) Thompson Sampling EE Phase:} For the next $\nu_3 2^z$ time slots, run the Thompson sampling algorithm based on the current state of the best RIS, as well as the corresponding collision indicator.
        \State \quad a) Draw $\hat{\theta}_{n,m} \sim \text{Beta}\left(\alpha_{n,m}(t)+1, \beta_{n,m}(t)+1\right)$;
        \State \quad b) Select a data rate $I''_{n,t} = \mathrm{arg}\max_{m \in \mathcal{M}} c_m \times \hat{\theta}_{n,m}$;
        \State \quad c) Detect the target RIS: The device directly transmits to the BS if busy, otherwise continue the following steps:
        \State \quad d) Transmit on the selected data rate and observe the random transmission feedback $X_{I''_{n,t}}(t)$;
        \State \quad e) Posterior update: Set $\alpha_{n,I''_{n, t}}(t) = \alpha_{n, I''_{n, t}}(t-1) + X_{I''_{n, t}}(t)$ and $\beta_{n,I''_{n, t}}(t) = \beta_{n,I''_{n, t}}(t-1) + 1-X_{I''_{n, t}}(t)$.
        \State \quad At the end of this phase, determine the current best data rate by
        \begin{equation*}
        c_n^{\ast} = \text{arg}  \max\limits_{m \in \mathcal{M}} \  \frac{c_m \alpha_{n,m}}{\alpha_{n,m}+\beta_{n,m}}.
        \end{equation*}
\EndFor
\end{algorithmic}
\end{algorithm*}

\textbf{$\epsilon$-Greedy EE Phase}:
There are $ \nu_1 z^\delta$ rounds in this phase for epoch $z = 1, \cdots, Z$, where $\nu_1>0$ and $\delta>0$ are two constants.
It aims to estimate the average successful transmission probability of each RIS.
The SF is randomly chosen from the set $\mathcal{S}$ when $z = 1$; otherwise, it uses the SF determined in the last epoch of the third phase.
We \com{adopt} the $\epsilon$-greedy algorithm to balance the EE dilemma.
Specifically, if $z = 1$, we set $\epsilon = 1$ to uniformly explore all RISs;
otherwise, we update the parameter $\epsilon$ according to \emph{Lemma} \ref{Lemma01}, as given in the next paragraph.
Hence, when the players' strategy profile deviates from $\boldsymbol{a}^\ast$, Algorithm \ref{GoT} tends to uniformly explore all actions;
otherwise, it inherits the last epoch's action with a high probability.

\textbf{Non-cooperation Game Phase}:
This phase has a length of $ \nu_2 z^\delta$ rounds, which is the core step of Algorithm \ref{GoT} to allocate the optimal RIS for each player.
By adopting the estimated average successful transmission probability $\hat{\theta}^z_{n, k}$ in the first phase as a utility, players in this phase will play a non-cooperation game.

Specifically, let the utility of player $n$ in strategy profile $\boldsymbol{I}'$ be
\begin{equation}\label{Utility}
u_n (\boldsymbol{I}') \triangleq \eta_{k}(\boldsymbol{I}') \hat{\theta}^z_{n, k}, \ \forall k \in \mathcal{K},
\end{equation}
where $\hat{\theta}^z_{n,k}$ is the estimated successful transmission probability that the $n$-th IoT device transmits on the $k$-th RIS from epoch $1$ to $z$ at the first phase.
Let $u_{n,\max} = \max\limits_{\boldsymbol{I}'} u_n(\boldsymbol{I}')$ be the maximum utility of player $n$.
\com{Assume that each player} has a private state $ST_n = \{C, D\}, \ \forall n \in \mathcal{N}$, where $C$ and $D$ represent content and discontent state, respectively.
In addition, each player maintains a baseline RIS $\bar{k}$.
\com{Then, a player chooses a RIS according to the following strategy:}
\begin{itemize}
  \item A content player has a very high probability of staying at the current baseline RIS:
    \begin{equation}\label{Pr_content}
    \renewcommand{\arraystretch}{1.2}
    \begin{split}
    P_{n,k} = \left\{
    \begin{array}{ll}
    \frac{\varepsilon^\nu}{K - 1}, & k \neq \bar{k}; \\
    1 - \varepsilon^\nu, & k = \bar{k}.
    \end{array} \right.
    \end{split}
    \end{equation}

  \item A discontent player selects a RIS following a uniform distribution, i.e.,
  \begin{equation}\label{Pr_discontent}
  P_{n,k} = \frac{1}{K}, \ \forall k \in \mathcal{K}.
  \end{equation}
  \end{itemize}
  The transition between content state $C$ and discontent state $D$ is given by:
\begin{itemize}
  \item If $k = \bar{k}$, $u_n>0$, and $ST_n = C$, then a content player keeps state $C$ with \com{a probability of 1}:
  \begin{equation}\label{State_C}
  (\bar{k}, C) \rightarrow (\bar{k}, C).
  \end{equation}

  \item If $k \neq \bar{k}$ or $u_n = 0$ or $ST_n = D$, then transitions of baselines and states are given by
    \begin{equation}\label{State_D}
    \renewcommand{\arraystretch}{1.2}
    \begin{split}
    \left(\bar{k}, C/D\right) = \left\{
    \begin{array}{ll}
    \left(k, C\right), & \frac{u_n}{u_{n,\max}} \varepsilon^{u_{n,\max} - u_n}; \\
    \left(k, D\right), & 1-\frac{u_n}{u_{n,\max}} \varepsilon^{u_{n,\max} - u_n}.
    \end{array} \right.
    \end{split}
    \end{equation}
\end{itemize}
Eq. \eqref{State_D} indicates that, when a RIS \com{records a collision} or in a busy state (i.e., $\eta_k=0$),
the player will transfer to the discontent state $D$ with \com{a probability of $1$ as} $u_n=0$.
On the other hand, when a RIS is optimal to the player, it will transfer to the content state $C$ with \com{a probability of $1$ as} $u_n= u_{n,\max}$.

Assume that all players' actions and states constitute a strategy profile $\boldsymbol{a_1}$.
\com{Then, a strategy graph can be constructed at the end of this phase,}
where the vertex is the strategy profile, and an edge exists if the players can switch from one strategy to the other.
Actually, this strategy graph forms a perturbed time-reversible Markov process over state space $\prod_{n=1}^{N}\left(\mathcal{K}_n \times (C, D)\right)$.
As pointed out in \cite{menon2013convergence, bistritz2018game, marden2014achieving}, there exists an optimal strategy profile that players will visit many times than other strategy profiles.
As a result, each player can agree on \com{its} optimal arm by recording the number of times that each arm has been selected, i.e.,
\begin{equation}\label{Num}
F^z_n (k) \triangleq \sum_{t\in \mathcal{G}_z} \mathbb{I} \left( I'_{n,t} = k, ST_n = C \right),  \ \forall k \in \mathcal{K},
\end{equation}
where $F^z_n (k)$ is the number of times that the $k$-th RIS has been played by the $n$-th player at the $z$-th epoch \com{under state $C$}.
The symbol $\mathcal{G}_z$ represents \com{the number of time slots} in the $z$-th epoch and $\mathbb{I}(\cdot)$ is an indicator function.
Finally, we can determine the best RIS by using the recent $\lfloor z/2 \rfloor +1$ epochs' $F^z_n$, i.e.,
\begin{equation}\label{Num01}
k_n^{\ast} = \text{arg}  \max\limits_{k \in \mathcal{K}} \ \sum_{j = 0}^{\lfloor\frac{z}{2}\rfloor} F^{z-j}_n(k).
\end{equation}

\com{In addition, $F^z_n$ can be used to design a criterion to balance the EE dilemma in the first phase by adjusting the parameter $\epsilon$.}
The reason is that it is unnecessary to uniformly explore all RISs when the non-cooperation game phase asymptotically approaches the optimal RIS.
This asymptotical behavior can be quantified by the distance between two adjacent vectors of $F^z_n$ and $F^{z-1}_n$,
which can be measured by the WD\footnote{Wasserstein distance, also known as earth mover's distance, is a measure to calculate the distance between two probability distributions on a metric space. In the simulation, we compute it using the corresponding function in Matlab.}.
As a result, the distance between the probability mass functions (PMFs)\footnote{The PMF is computed by $\mathbb{P}(F^z_n(i))= F^z_n(i)/ \sum_i F^z_n(i),  \ \forall i \in \mathcal{K}$.} of $F^z_n$ and $F^{z-1}_n$ is regarded as a criterion to adjust the parameter $\epsilon$.
Thus, we have the following lemma.
\begin{lemma}\label{Lemma01}
For the $n$-th player, given $z>1$,  the parameter of the $\epsilon$-greedy algorithm in the first phase can be chosen according to
\begin{equation*}
\epsilon \triangleq  \min \{1, \text{D}_{\small{\text{WD}}} \left( \mathbb{P}(F^z_n) \ || \ \mathbb{P}(F^{z-1}_n) \right) \},
\end{equation*}
where $\text{D}_{\text{WD}} \left( \cdot \ || \ \cdot \right)$ is the calculation of the WD in \cite{rubner2000earth}.
Terms $\mathbb{P}(F^z_n)$ and $\mathbb{P}(F^{z-1}_n)$ are the PMFs of \eqref{Num} at epochs $z$ and $z-1$ of the second phase, respectively.
\end{lemma}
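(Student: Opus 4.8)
The plan is to separate \emph{Lemma}~\ref{Lemma01} into a well-definedness part and a substantive part. The well-definedness part---that the prescribed $\epsilon$ is a legitimate exploration probability, $\epsilon\in[0,1]$---is immediate: $\text{D}_{\text{WD}}(\cdot\ ||\ \cdot)$ is the Kantorovich--Rubinstein metric on probability measures over $\mathcal{K}$ as computed in \cite{rubner2000earth}, hence non-negative, and the outer $\min\{1,\cdot\}$ truncates it at $1$; the arguments $\mathbb{P}(F^z_n)$ and $\mathbb{P}(F^{z-1}_n)$ are bona fide PMFs by their normalization, so the right-hand side is meaningful. The substantive part is to show that this is a \emph{convergence-aware} schedule: $\epsilon$ should tend to $0$ exactly as the non-cooperation game phase approaches the optimal RIS assignment, so that the uniform exploration of the first phase is suppressed precisely when it is no longer needed.

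First I would invoke the convergence properties of the non-cooperation game. By \cite{bistritz2018game, marden2014achieving, menon2013convergence}, the content/discontent dynamics \eqref{Pr_content}--\eqref{State_D} induce a perturbed time-reversible Markov chain on $\prod_{n=1}^N(\mathcal{K}_n\times\{C,D\})$ whose stochastically stable states, as the perturbation $\varepsilon\to 0$, are exactly the strategy profiles maximizing $\sum_n u_n$, i.e., the optimal assignment $\boldsymbol{a}^*$. Next, an ergodic-averaging step: since the phase length $\nu_2 z^\delta$ grows with $z$, the empirical occupation measure over $\mathcal{G}_z$ converges to the (perturbed) stationary distribution, which concentrates on $\boldsymbol{a}^*$. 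Marginalizing to player $n$ gives $F^z_n(k_n^*)/\sum_k F^z_n(k)\to 1$ and $F^z_n(k)/\sum_k F^z_n(k)\to 0$ for $k\neq k_n^*$, i.e., $\mathbb{P}(F^z_n)\to \boldsymbol{e}_{k_n^*}$, the PMF concentrated at $k_n^*$.

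The third step is a triangle-inequality sandwich:
\begin{equation*}
\text{D}_{\text{WD}}\!\left(\mathbb{P}(F^z_n)\ ||\ \mathbb{P}(F^{z-1}_n)\right)\le \text{D}_{\text{WD}}\!\left(\mathbb{P}(F^z_n)\ ||\ \boldsymbol{e}_{k_n^*}\right)+\text{D}_{\text{WD}}\!\left(\boldsymbol{e}_{k_n^*}\ ||\ \mathbb{P}(F^{z-1}_n)\right),
\end{equation*}
and since both terms on the right vanish as $z\to\infty$ by the previous step, so does $\epsilon$; conversely, as long as $\mathbb{P}(F^z_n)$ and $\mathbb{P}(F^{z-1}_n)$ remain spread out and dissimilar---the signature of a game that has not yet settled---the WD, and hence $\epsilon$, stays bounded away from $0$, which is the adaptive behavior we want. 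To make the decay quantitative I would apply the Wasserstein concentration bound of \cite{arras2019bound}: writing $F^z_n(k)=\sum_{t\in\mathcal{G}_z}\mathbb{I}(I'_{n,t}=k,ST_n=C)$ as a sum of $\beta$-mixing Bernoulli indicators, that bound controls $\text{D}_{\text{WD}}(\mathbb{P}(F^z_n)\ ||\ \boldsymbol{e}_{k_n^*})$ by a term of order $|\mathcal{G}_z|^{-1/2}=(\nu_2 z^\delta)^{-1/2}$ (up to the chain's mixing time), yielding an explicit $\epsilon=\mathcal{O}(z^{-\delta/2})$-type estimate that feeds directly into the regret analysis.

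The main obstacle will be the ergodic-averaging step. The dynamics form a \emph{time-inhomogeneous} perturbed chain: $\varepsilon$ is fixed within a phase but the utilities $\hat\theta^z_{n,k}$, the baseline $\bar k$ inherited from an earlier epoch, and the phase length all change with $z$, so one must show that $\nu_2 z^\delta$ slots suffice for the empirical counts to track the stationary law uniformly in $z$---i.e., that the phase length dominates the mixing time. A secondary subtlety is that comparing $\mathbb{P}(F^z_n)$ with $\mathbb{P}(F^{z-1}_n)$ is only informative when they target the \emph{same} limit, which needs the stochastically stable profile to be unique or at least consistently selected across consecutive epochs; if multiplicity is possible the argument should be carried out on the probability-one event that the same optimal profile is picked, mirroring the conditioning already implicit in \eqref{Num01}.
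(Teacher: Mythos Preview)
Your proposal is considerably more ambitious than the paper's own treatment. In the paper, \emph{Lemma}~\ref{Lemma01} is not accompanied by a proof at all: it is stated as a design prescription, and the only justification is the heuristic paragraph immediately preceding it, which says (paraphrasing) that once the non-cooperation game phase asymptotically approaches the optimal RIS it is unnecessary to uniformly explore, and that this asymptotic behavior ``can be quantified by the distance between two adjacent vectors of $F^z_n$ and $F^{z-1}_n$,'' measured by the Wasserstein distance. That is the entirety of the paper's argument; the lemma functions as a \emph{definition} of the adaptive $\epsilon$, not as a theorem with hypotheses and conclusion.

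What you have written is a genuine attempt to \emph{prove} something the paper merely asserts informally: that the prescribed $\epsilon$ is well-defined in $[0,1]$ and that it vanishes exactly when the game phase has settled. Your well-definedness step and the triangle-inequality sandwich against the limiting point mass $\boldsymbol{e}_{k_n^*}$ are sound and make the heuristic precise. The quantitative part---invoking \cite{arras2019bound} to get an $\mathcal{O}(z^{-\delta/2})$ rate via a $\beta$-mixing concentration bound---goes well beyond anything the paper establishes or uses; note in particular that the regret analysis in Appendix~\ref{appendix1} never exploits a decay rate for $\epsilon$, so this refinement, while interesting, is not needed for Theorem~\ref{Theorem01}. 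Your caveats about time-inhomogeneity of the chain and possible non-uniqueness of the stochastically stable profile are exactly the right technical worries, but since the paper does not attempt a rigorous proof here, there is nothing in the paper to compare those steps against.
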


\textbf{Thompson Sampling EE Phase}:
The last phase has a length of $\nu_3 2^z$ rounds where $\nu_3>0$ is a constant.
The objective is to find the best SF for each player based on the busy or idle state of the determined RIS in the second phase.
\com{Note that there are two types of resource allocations in this phase for transmission Pattern \RNum{1} and Pattern \RNum{2}.
The main difference between the two allocations is that,
in Pattern \RNum{1}, it requires jointly estimate $\theta^n_{k,c_m}$  and $\theta^n_{c_m}$ at each epoch;
while in Pattern \RNum{2}, it only needs to estimate the $\theta^n_{c_m}$ at each time slot.}

As mentioned before, the second-stage MPMAB problem can be regarded as a single-player MAB problem.
Therefore, we can adopt the TS algorithm \cite{agrawal2013further} to solve the second-stage MPMAB problem to track the Bernoulli distribution rewards (i.e., transmission success or failure).
The TS algorithm first maintains a Beta prior distribution\footnote{Beta$(\alpha,\beta)$ is the beta distribution with probability density function (pdf): $f_{\alpha,\beta}(y)=\frac{y^{\alpha-1}(1-y)^{\beta-1}}{B'(\alpha,\beta)},y\in [0,1], \ \mathrm{where} \ B'(\alpha,\beta)=\frac{\Gamma(\alpha)\Gamma(\beta)}{\Gamma(\alpha+\beta)}.$} for each SF.
Thus, the objective of this phase is equivalent to estimating the parameter in the Beta distribution, which will converge to the true value of $\theta^n_{k,c_m}$ or $\theta^n_{c_m}$.
Based on the transmission feedback, TS algorithm is able to update the posterior distribution by: $\alpha = \alpha + 1$ if transmission is successful, otherwise $\beta = \beta + 1$.
Notice that the value function (i.e., $c_m \hat{\theta}_{n,m}$) is the current data rate, instead of the successful or failed transmission feedback.
At the end of this phase, it can determine the current best SF for each IoT device by using the \com{estimated parameters in the Beta distribution}, i.e.,
\begin{equation}
 c_n^{\ast} = \arg  \max\limits_{m \in \mathcal{M}} \  \frac{c_m \alpha_{n,m}}{\alpha_{n,m}+\beta_{n,m}}.
\end{equation}

\subsection{Complexity and Feasibility Analysis }\label{PA1-2}
We first give a brief discussion on the computational complexity of the proposed algorithm.
In Algorithm \ref{GoT}, the computational complexity of the first phase is $\mathcal{O} (\nu_1 z^{\delta} L_{\mathrm{ED}})$, where $L_{\mathrm{ED}}$
is the length of samples in the energy detector of the spectrum sensing operation.
Meanwhile, the computational complexity of the second phase is $\mathcal{O} (\nu_2 z^{\delta} + z K\log_2K)$, where the second term comes from the WD in the calculation of parameter $\epsilon$ \cite{arras2019bound}.
In addition, the computational complexity of the third phase is $\mathcal{O} (\nu_3 2^z M\log_2 M)$, where the complexity comes from the `argument maximum' operation in the TS algorithm \cite{agrawal2013further}.
Therefore, the total computational complexity is $\mathcal{O} (\nu_1 z^{\delta} L_{\mathrm{ED}} + \nu_2 z^{\delta} + z K\log_2K + \nu_3 2^z M\log_2 M)$,
which increases linearly logarithmically with the number of RISs $K$ and SFs $M$.
As the time epoch $z$ increases, the complexity of the third phase will become the dominant factor in the total complexity.
Therefore, the total complexity of Algorithm 1 is about $\mathcal{O} (\nu_3 2^z M\log_2 M)$.

Next, we discuss the feasibility of the proposed algorithm in practical applications, e.g., B5G/6G networks.
First, the proposed algorithm performs in real time and automatically converges to the optimal solution (i.e., the online learning feature).
Second, its distributed feature can reduce the communication overhead and make it easy to apply to other network scenarios.
Third, the complexity of the proposed algorithm increases linearly logarithmically with the number of RISs $K$ and the SFs $M$.
These features demonstrate that the proposed algorithm has great potential for application in B5G/6G networks with different requirements for rate, delay, scalability, and reliability.

\com{In Algorithm \ref{GoT}, we only consider the case that the number of IoT devices is less than the RISs, i.e., $K\geq N$.}
However, the proposed algorithm can also handle the case of $K<N$ by dividing \com{$N$ IoT devices} into $K$ clusters using the $k$-means clustering method \cite{likas2003global} according to their geographic locations.
We assume that the IoT devices in the same cluster prefer the same RIS and communicate with the BS using the round-robin method.
Therefore, one of the clustering IoT devices can transmit on the RIS;
while others directly transmit data to the BS at each time slot.
In other words, Algorithm \ref{GoT} still works in the case of $K<N$ by allocating the optimal RIS to each cluster rather than each IoT device.

\begin{algorithm}[h]
\caption{Modified E2Boost Algorithm Run by Player $n$ for the case $K<N$}
\label{GoT2}
\begin{algorithmic}[1]
\State \textbf{Initialize:} Parameters in Algorithm \ref{GoT}
\For {each time slot $t=1,2,\cdots$, $T$}
\State Check the round-robin flag
\State If the flag is equal to $1$, run the E2Boost algorithm in Algorithm \ref{GoT}
\State Otherwise, run the TS algorithm in the third phase of Algorithm \ref{GoT}
\EndFor
\end{algorithmic}
\end{algorithm}
Therefore, we give a modified E2Boost algorithm to handle the case of $K<N$, as shown in Algorithm \ref{GoT2}.
At the beginning of each time slot,
\com{the IoT device $n$ first checks its round-robin flag to determine its transmission patterns:}
If the flag is equal to $1$, the IoT device runs the E2Boost algorithm in Algorithm \ref{GoT} to find the optimal RIS and SF;
otherwise, it runs the TS algorithm in the third phase of Algorithm \ref{GoT} to find the optimal SF.

\subsection{An Upper Pseudo-Regret Bound}\label{PA2}
We derive an upper pseudo-regret bound for the E2Boost algorithm.
Since each IoT device has two transmission patterns, the pseudo-regret also consists of the RIS-enabled regret and the non-RIS-enabled regret parts.

\com{For the RIS-enabled regret part, the performance analysis of the RIS-enabled regret is mainly built on \cite{bistritz2018game} since the E2Boost algorithm shares the same architecture of the GoT algorithm.
On the other hand, the Bernoulli reward processing in this work strictly meets the key condition of \emph{Definition 1} in \cite{bistritz2018game}.}
However, compared with the GoT algorithm, the proposed algorithm has the following features.
First, it is a two-stage MPMAB framework that has a small arm space (i.e., $\mathcal{K}$) to explore.
Its total pseudo-regret only depends on the number of RISs, instead of the whole arm space (i.e., $\mathcal{K} \otimes \mathcal{M}$) as that in the GoT algorithm.
Second, we embed the $\epsilon$-greedy algorithm into the first phase to further trade off the EE dilemma.
Thus, the accumulated regrets of this phase will trend to $0$  when all players agree on their optimal RISs.
Third, we incorporate the TS algorithm into the third phase to determine the best SF.
Similarly, only a few accumulated regrets will be accrued in this phase when the optimal RIS is determined.
\com{Therefore, these features} enable us to achieve a tighter pseudo-regret bound than the GoT algorithm.

\com{For the non-RIS-assisted regret part,} the E2Boost algorithm only has the third phase, i.e., the second-stage MPMAB problem.
Since two or more players that select the same arm (or SF) will not collide, this MPMAB problem is reviewed as a single stochastic MAB problem.
In Algorithm \ref{GoT}, we use the modified TS (MTS) algorithm \cite{Gupta2018Low-complexity} to solve this single stochastic MAB problem.
Therefore, we adopt the theoretical results in \cite{Gupta2018Low-complexity} to derive the non-RIS-enabled regret.

To conclude, we have the following theorem.
\begin{theorem}\label{Theorem01}
Let $\Gamma_{\max} = \max_{n,i} \mu_{n,i}$ be the maximum real expected rewards among all players' arms.
For any hybrid uplink network, given $\nu_1>0, \nu_2>0, \nu_3 >0$, $\delta\geq0$, $0<\varpi<1$ and a small enough $\varepsilon$,
the total upper pseudo-regret bound obtained by the E2Boost algorithm is
\begin{equation}
\begin{split}
\overline{Reg} \leq &N\Gamma_{\max} (1-P_a) \left( 2(\nu_1 +\nu_2) \log^{1+\delta}_2 \left(\frac{T}{\nu_3}+2\right)\right.\\
&\left.+(6NK+1)\nu_3 \log_2 \left(\frac{T}{\nu_3}+2\right)\right) \\
&+  P_a (1+\varpi) \sum_{n=1}^{N} \sum_{a_n\in \mathcal{M}} \frac{\log_2 T}{\mathrm{D}_\mathrm{KL}(a_n, a_n^{\ast})} \Delta_{n,a_n},
\end{split}
\end{equation}
where $\log_2^{1+\delta} \left(T/\nu_3+2 \right)$ denotes $\log_2 \left(T/\nu_3+2 \right)$ to the power of $(1+\delta)$ and $\mathrm{D}_\mathrm{KL}(\cdot)$ is the Kukkback-Leibler divergence.
Term $P_a$ is the active probability of the UE.
\end{theorem}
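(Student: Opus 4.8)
The plan is to decompose the pseudo-regret according to the two transmission patterns, handle the Pattern \RNum{1} part by leveraging the GoT analysis of \cite{bistritz2018game} (whose game-phase architecture the E2Boost algorithm reuses), and handle the Pattern \RNum{2} part by invoking the modified Thompson sampling regret bound of \cite{Gupta2018Low-complexity}. First I would note that at every time slot the targeted RIS is idle with probability $1-P_a$ (the UE is inactive) and busy with probability $P_a$; conditioned on the idle event every player runs all three phases of Algorithm \ref{GoT} (Pattern \RNum{1}), while conditioned on the busy event phases i) and ii) are skipped and only the Thompson sampling phase executes (Pattern \RNum{2}). Writing $\overline{Reg} = (1-P_a)\,\overline{Reg}_{\mathrm{I}} + P_a\,\overline{Reg}_{\mathrm{II}}$ via \eqref{ERegret} reduces the problem to bounding the two pattern-specific regrets separately.

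For the epoch count I would observe that, since phase iii) contributes $\nu_3 2^z$ slots and dominates phases i) and ii), the number of epochs $Z$ reached by time $T$ satisfies $\sum_{z=1}^{Z}\nu_3 2^z = \nu_3(2^{Z+1}-2)\le T$, hence $Z+1\le \log_2(T/\nu_3+2)$. Using this, the regret accumulated in phases i) and ii) of Pattern \RNum{1} is at most $N\Gamma_{\max}$ per slot times $\sum_{z=1}^{Z}(\nu_1+\nu_2)z^{\delta}$, which is $O\!\left(N\Gamma_{\max}(\nu_1+\nu_2)(Z+1)^{1+\delta}\right)$ and yields the $2(\nu_1+\nu_2)\log_2^{1+\delta}(T/\nu_3+2)$ term; the $\epsilon$-greedy sub-routine and \emph{Lemma} \ref{Lemma01} only shrink $\epsilon$ and can thus only reduce this, so bounding $\epsilon\le 1$ suffices for the worst case. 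For the exploitation phase iii) of Pattern \RNum{1}, the key input is the GoT machinery: since the Bernoulli reward \eqref{Reward01} satisfies the interdependence condition of \emph{Definition 1} in \cite{bistritz2018game}, the perturbed time-reversible Markov chain over $\prod_{n}(\mathcal{K}_n\times\{C,D\})$ induced by \eqref{Pr_content}--\eqref{State_D} has the sum-rate-optimal configuration as its unique stochastically stable state for small enough $\varepsilon$; combined with the $\lfloor z/2\rfloor+1$-epoch majority rule \eqref{Num01}, the probability that $\{k_n^{\ast}\}_n$ is misidentified at epoch $z$ decays geometrically in $z$ with a prefactor polynomial in $N$ and $K$. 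Multiplying this error probability by the phase-iii) length $N\nu_3 2^z$ and summing over epochs makes the $2^z$ factors cancel, leaving the $(6NK+1)\nu_3\log_2(T/\nu_3+2)$ term, where the $+1$ absorbs the lower-order Thompson sampling regret incurred while learning the SF under a correctly identified RIS. Crucially, because the first stage searches only $\mathcal{K}$ rather than $\mathcal{K}\otimes\mathcal{M}$, the combinatorial constants here scale with $K$ rather than $KM$, which is the promised factor-$M$ improvement over GoT.

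For $\overline{Reg}_{\mathrm{II}}$, each player runs a single long Thompson sampling instance over $\mathcal{M}$ with scaled-Bernoulli rewards $c_m\theta^n_{c_m}$, and since two players selecting the same SF do not collide, the $N$ instances are independent single-player stochastic bandits. Applying the MTS analysis of \cite{Gupta2018Low-complexity}, for any $0<\varpi<1$ the expected number of pulls of a suboptimal SF $a_n$ is at most $(1+\varpi)\log_2 T/\mathrm{D}_{\mathrm{KL}}(a_n,a_n^{\ast})$ up to an $O(1)$ additive term, so weighting these counts by $\Delta_{n,a_n}$ and summing over all players and suboptimal SFs yields the third bracket of the statement. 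Adding $(1-P_a)\overline{Reg}_{\mathrm{I}}$ and $P_a\overline{Reg}_{\mathrm{II}}$ gives the claimed bound. The main obstacle is the exploitation-phase estimate for Pattern \RNum{1}: one must verify that the reduced-arm-space game still meets the hypotheses of \cite{bistritz2018game} so that the optimal configuration is stochastically stable and the resistance-tree argument applies, and, more delicately, that the per-epoch misidentification probability decays fast enough relative to the geometrically growing phase-iii) length that the summed exploitation regret stays logarithmic in $T$ rather than linear; it also remains to check that the interleaved $\epsilon$-greedy and Thompson sampling steps do not disturb the Markov-chain analysis of the game phase.
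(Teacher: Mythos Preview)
Your plan is essentially the paper's proof: split $\overline{Reg}=(1-P_a)\overline{Reg}_{\mathrm{I}}+P_a\,\overline{Reg}_{\mathrm{II}}$, drive the Pattern~\RNum{1} part through the GoT machinery of \cite{bistritz2018game}, quote the MTS bound of \cite{Gupta2018Low-complexity} for Pattern~\RNum{2}, and convert epochs to $T$ via $\sum_{z}\nu_3 2^z\le T$. Two implementation details differ. First, you charge full per-slot regret to phase~i) in every epoch, whereas the paper writes $\overline{Reg}_z \le N\Gamma_{\max}\nu_2 z^\delta + \Pr(E_z)\,N\Gamma_{\max}(\nu_1 z^\delta + \nu_3 2^z)$, i.e.\ it charges phase~i) only on the error event $E_z$, exploiting that the $\epsilon$-greedy rule incurs negligible loss once the optimal RIS is known; your cruder bookkeeping still gives the same order. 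Second, for the SF learning inside Pattern~\RNum{1} you propose to absorb ``lower-order TS regret'' into the $+1$ of $(6NK+1)$, but the paper instead folds SF misidentification into the error event itself: it decomposes $\Pr(E_z)=\Pr(E^{k^\ast}_z)+\Pr(\overline{E^{k^\ast}_z},E^{m^\ast}_z)$ and bounds the second summand by a large-deviation/Pinsker argument giving $\Pr(E^{m^\ast}_z\mid\overline{E^{k^\ast}_z})\le N\,2^{-\nu_3(2^z-1)(M-2)^2/M^2}$, which is doubly exponential in $z$ and thus negligible against $\nu_3 2^z$. The ``$2^z$ factors cancel'' step you flag as the main obstacle is precisely where the paper checks that each of the three exponential error rates (exploration estimation, game-phase frequency, SF concentration) is eventually below $1/2$, so that $\Pr(E_z)\cdot 2^z$ is $O(1)$ per epoch and summing over $Z\le\log_2(T/\nu_3+2)$ epochs yields the $(6NK+1)\nu_3\log_2(T/\nu_3+2)$ term.
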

\begin{proof}
See Appendix \ref{appendix1}.
\end{proof}

\begin{remark}\label{remark1}
The first two terms of the upper pseudo-regret bound account for the RIS-assisted regret part;
while the third term is the non-RIS-assisted regret part.
We can see that the weights of these two parts rely on the active probability of the UE.
\end{remark}

\begin{remark}\label{remark2}
The total upper pseudo-regret bound increases logarithmically with $T$, i.e., $\overline{Reg} = \mathcal{O}(\log^{1+\delta}_2 T)$,
indicating that Algorithm \ref{GoT} will converge and the per-round regret approaches zero when $T$ is sufficiently large.
\end{remark}

\begin{remark}\label{remark3}
The total upper pseudo-regret bound in the E2Boost algorithm is much tighter than the GoT algorithm.
\com{According to \cite{bistritz2018game}, the total upper pseudo-regret bound of the GoT algorithm is}
\begin{equation}
\begin{split}
 \overline{Reg}_{\mathrm{GoT}} \leq &4N\Gamma_{\max}(\nu_1+\nu_2) \log^{1+\delta}_2 \left(\frac{T}{\nu_3}+2\right) \\
& + N\Gamma_{\max}(6NKM+1)\nu_3 \log_2 \left(\frac{T}{\nu_3}+2\right) \\
&= \mathcal{O}(\log^{1+\delta}_2 T).
\end{split}
\end{equation}
For example, when $\nu_1=\nu_2=\nu_3$, $\delta = 0$, \com{and $P_a=0$,} we have
\begin{equation}
\begin{split}
\overline{Reg} \leq \left(5+6NK \right)\nu_1 N\Gamma_{\max}  \log_2 \left(\frac{T}{\nu_3}+2\right),
\end{split}
\end{equation}
and
\begin{equation}
\begin{split}
\overline{Reg}_{\mathrm{GoT}} \leq \left(9+6NKM \right)\nu_1 N\Gamma_{\max}  \log_2 \left(\frac{T}{\nu_3}+2\right).
\end{split}
\end{equation}
It can be seen that $\overline{Reg}$ is about $M$ times lower than $\overline{Reg}_{\mathrm{GoT}}$.
This observation can be verified by the numerical results in the following section.
\end{remark}

\section{Simulation Results}\label{NA2}
We conduct extensive simulations to evaluate the performance of the proposed algorithms.
The simulation parameters are chosen according to the 3GPP standard \cite{3GPP} and refs. \cite{zhang2020reconfigurable, di2019hybrid}.
All results are obtained from $10^3$ Monte Carlo (MC) trials.

\subsection{Parameter Configuration and Baseline Algorithms}
\textbf{Parameter Configuration:}
The transmit power at each IoT device is $\Omega_n = 20$ dBm,  $\forall n \in \mathcal{N}$.
The background noise plus interference power is $ -95$ dBm, and the wavelength $\lambda$ is set according to the central carrier frequency $5.9$ GHz.
The bandwidth $B$ is $40$ MHz.
The Rician factor is $\zeta = 4$, and the antenna gain $G$ is set to $1$.
Each IoT device has $6$ SFs to choose from, as shown in TABLE \ref{Rate}.
The data rates are determined by \eqref{SF} and the thresholds are the minimum required SINR to demodulate the received signal.
Assume that the active probability of each RIS (i.e., occupied by the legal UEs) is $P^k_{a}=0.2$.
In addition, we adopt the UMa model \cite{3GPP} to describe the path loss of both LoS and NLoS components.
\begin{table}[!t]
\renewcommand{\arraystretch}{1.3}
\caption{The transmission parameters for the C-IoT device}
\label{Rate}
\centering
\begin{tabular}{|c|c|c|c|c|c|c|}
  \hline
  SF   & $7$ & $8$ & $9$ & $10$ & $11$ & $12$ \\
  \hline
  Data Rate (Mbps) & $1.09$ & $0.63$ & $0.35$  & $0.20$ & $0.11$ & $0.06$ \\
  \hline
  Threshold ($\times 10^3$) & $4.5$ & $4$ & $3.5$ & $3$ & $2.5$ & $2$ \\
  \hline
\end{tabular}
\end{table}

The RIS is placed perpendicular to the ground, and the number of elements is $101 \times 101$.
The direction of the RIS in the $XY$-plane is shown in Fig. \ref{RS}.
The angle $\angle \varphi$ and all elements' locations in RIS are determined according to Appendix \ref{appendix2}.
Each element contains $b = 8$ PIN diodes with the refection amplitude $A = 1$.
We consider two types of phase shift settings, i.e., the \emph{optimal phase shift} and the \emph{constant phase shift}.
For the \emph{optimal phase shift} setting, each RIS's phase shifts are set to be optimal to the UEs (\emph{see Proposition 2} of \cite{zhang2020reconfigurable}), i.e.,
\begin{equation}
\begin{split}
\tau_{l_1, l_2} = \left\lfloor \left(\Pi - \frac{2\pi}{\lambda}L^k_{l_1,l_2} \right)\frac{2^b}{2 \pi} \right\rfloor \frac{2\pi}{2^b},
\end{split}
\end{equation}
where $\Pi$ is an arbitrary constant and $L^k_{l_1,l_2}$ is the distance between the BS, and the UEs through the $k$-th RIS's $(l_1, l_2)$ element.
For the \emph{constant phase shift} setting, we assume that the phase shifts on all RISs' elements are equal.
That is, all integers $\rho_{l_1, l_2}$ in \eqref{phase-shift} are simply set to a constant (we set to $170$ in the following simulations).
Note that $\rho_{l_1, l_2}$ can be an arbitrary integer in the range of $[0, 2^b-1]$.
\begin{figure}[!t]
\centering
\includegraphics[width=2.3in]{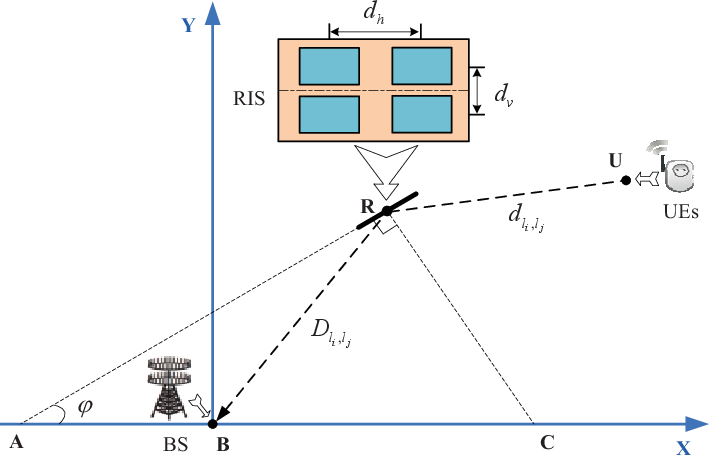}
\caption{A schematic diagram of the UE-RIS-BS link  (top view).}
\label{RS}
\end{figure}

\textbf{Baseline Algorithms:}
\com{We compare the E2Boost algorithm with the optimal solution, GoT algorithm, Q-learning method, random selection method, E2Boost without TS algorithm, and E2Boost without WD algorithm.
Next, we introduce these baseline algorithms in detail.}
\begin{itemize}
  \item \emph{Optimal Solution:}
  The optimal solution is obtained by solving the two-stage MPMAB problem in a centralized form.
  \com{In the Pattern \RNum{1}, it allocates the optimal RIS and SF to each IoT device by using the Hungarian algorithm \cite{papadimitriou1998combinatorial},
  where the only required information is $\theta_{k,c_m}^n$ and $\theta_{c_m}^n$.}
  In the simulation, we obtain this information by recording the received SINR $\gamma_{n,k}$ and $\gamma_{n}$ with the above simulation parameters over $10^5$ MC trails.
  Then, $\hat{\theta}_{k,c_m}^n$ and $\hat{\theta}_{c_m}^n$ can be estimated by comparing these SINRs with a given threshold $\Psi_m$.
  Note that $\hat{\theta}_{k,c_m}^n$ and $\hat{\theta}_{c_m}^n$ can approach the true values of $\theta_{k,c_m}^n$ and $\theta_{c_m}^n$ arbitrarily as long as the number of MC trials is sufficiently large.
  Based on this information and the data rate $c_m$ in Table \RNum{1}, \com{the optimal RIS and SF of each IoT device} can be obtained by using the Hungarian algorithm (i.e., the \emph{munkres} function in Matlab).
  \com{In the Pattern \RNum{2},} we determine the optimal SF for each IoT device using the genie-aided solution (i.e., from God's perspective) as the $\hat{\theta}_{c_m}^n$ and $c_m$ are known.
  \item \emph{GoT Algorithm:} The GoT algorithm in \cite{bistritz2018game} is a fully distributed algorithm to solve the decentralized resource allocation problems.
  \com{It has the same architecture as the proposed algorithm.}
  However, it lacks the $\epsilon$-greedy algorithm and the TS algorithm in the first and third phases to further balance the EE dilemma.
  In addition, it needs to explore the combinations of RISs and SFs; while the proposed algorithm explores the RISs and SFs separately.
  \item \emph{Q-learning method:}
  For the Q-learning method, \com{the state is the target RIS's busy or idle state.}
  The state transition probability is the RIS's active or passive probability $P^k_a$ or $1-P^k_a$.
  The actions are the set of SFs $\mathcal{M}$ if the target RIS is in a busy state;
  otherwise, the actions are the combinations of SFs and RISs, i.e., $\mathcal{K}\otimes \mathcal{M}$.
  \item \emph{Random Selection:} For the random selection method, each IoT device uniformly chooses an arm from the arm space $\mathcal{K}\otimes \mathcal{M}$ in Pattern
  \RNum{1} or the arm space $\mathcal{M}$ in Pattern \RNum{2} at each time slot. There is no EE mechanism inside.
  \item \emph{E2Boost without TS:} Compared with the E2Boost algorithm, it removes the TS algorithm from the third phase. Moreover, it requires exploring the combinations of the RISs and SFs in the first phase with the $\epsilon$-greedy algorithm.
  \item \emph{E2Boost without WD:} Compared with the E2Boost algorithm, the only difference is that it maintains a constant exploration rate $\epsilon$ for the $\epsilon$-greedy algorithm, \com{rather than adaptively adjusting $\epsilon$ in the E2Boost algorithm.}
\end{itemize}

It is worth noting that there is a performance gap between the solutions of the two-stage MPMAB problem and the original problem \eqref{DSO}.
Specifically, the solution of problem \eqref{DSO} is to allocate the optimal available RIS and SF to each IoT device at each time slot $t$;
while the solution (i.e., the optimal solution) of the two-stage MPMAB problem is to assign the optimal RIS and SF to each IoT device average over the time horizon $T$.
As a result, the performance of the two-stage MPMAB problem is slightly \com{poorer} than that of problem \eqref{DSO}.
However, Theorem \ref{Theorem01} shows that the proposed algorithm can converge to the optimal solution when $T$ is sufficiently large.
The following simulation results will also verify this.

\subsection{Fixed Network Scenario}
We first consider a fixed network scenario in a $200$ $\mathrm{m}$ $\times$ $200$ $\mathrm{m}$ square area, as shown in Fig. \ref{NS},
where $N = 3$ cellular IoT devices are located in a $45$ $\mathrm{m}$ $\times$ $45$ $\mathrm{m}$ circle area.
For simplicity, we assume that all UEs are centered in the point $(x,y)=(150, 150)$ $\mathrm{m}$.
Outside this circle are the BS and $K=3$ RISs.
The distances between the BS and the center of the RIS, as well as the IoT device and the center of the RIS, are calculated by the Euclidean distances \emph{w.r.t.} their locations (i.e., $D_{l_i,l_j}$ and $d_{l_i,l_j}$ in Fig. \ref{RS}), respectively.
The RIS and the BS heights are $10$ m and $20$ m, respectively.
\begin{figure}[!t]
\centering
\includegraphics[width=2.2in]{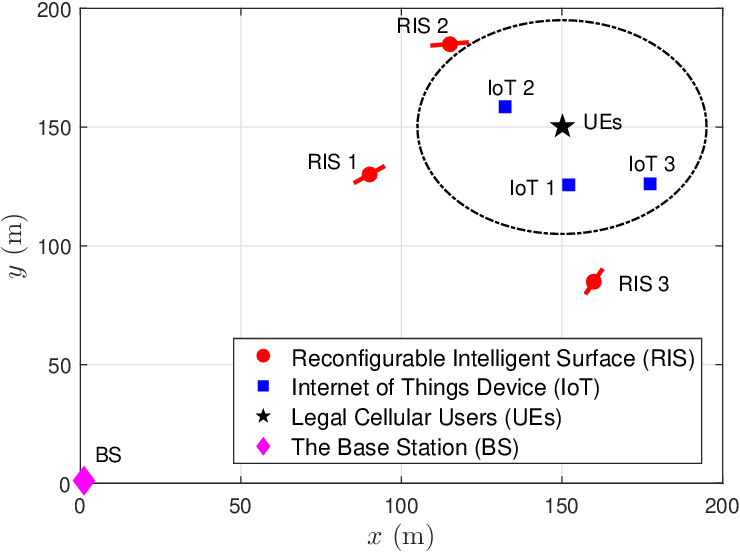}
\caption{A fixed network scenario in a $200\times 200$ m square area with $K=3, N=3$ (top view).}
\label{NS}
\end{figure}

Fig. \ref{MultiRate_NumArm01} shows the allocation results of the E2Boost algorithm for the \emph{optimal phase shift} setting.
The simulation parameters for the E2Boost algorithm are $\nu = 1.4$, $\delta = 0$, $\varepsilon = 0.01$, $Z = 10$, $\nu_1 = \nu_2 = 1000$ and $\nu_3 = 100$.
The average throughput is computed by $\frac{1}{t} \sum_{t=1}^T \mu_{n, I_{n,t}}$.
It can be seen from Fig. \ref{MultiRate_NumArm01} (b-d) that each player will converge to its own optimal SF and RIS,
i.e., player 1 $\rightarrow$ (RIS3, SF1), player 2 $\rightarrow$ (RIS1, SF1) and player 3 $\rightarrow$ (RIS2, SF1).
All players prefer SF1 with the highest data rate of $1.09$ Mbps in Table \ref{Rate}.
The average throughput of all players is $2.3859$ Mbps, which is slightly less than the optimal solution's $2.4315$ Mbps.
In addition, IoT device 3 accounts for the lowest average throughput by $0.4971$ Mbps due to the long transmission distance between IoT 3-RIS2-BS links.
The IoT device 3 does not choose RIS3 because the direction (or the phase shifts) of RIS2 is more suitable for IoT 3 than RIS3, i.e., RIS2-UEs-IoT3 in a line.

Similarly, Fig. \ref{MultiRate_NumArm02} shows the allocation results of the E2Boost algorithm for the \emph{constant phase shift} setting.
We can see that the average throughput of all players is just about $0.5782$, which is much lower than the \emph{optimal phase shift} setting.
In addition, player 1 and player 2 disagree on the optimal RIS since there is a collision between them.
The reason is that the time horizon of the second phase in Algorithm \ref{GoT} is too short (i.e., $\nu_2=1,000$) to resolve this collision.
As a result, the highest SF for Pattern \RNum{2} is chosen frequently, resulting in low average throughput.
To conclude, Figs. \ref{MultiRate_NumArm01} and \ref{MultiRate_NumArm02} demonstrate that the channel gains between the IoT device and the BS not only rely on the path-loss gain but also depend on the settings of phase shifts and direction in the RIS.

\setlength{\textfloatsep}{5pt}
\begin{figure*}[!t]
\centering
\subfloat[Throughput]{\includegraphics[width=1.3in]{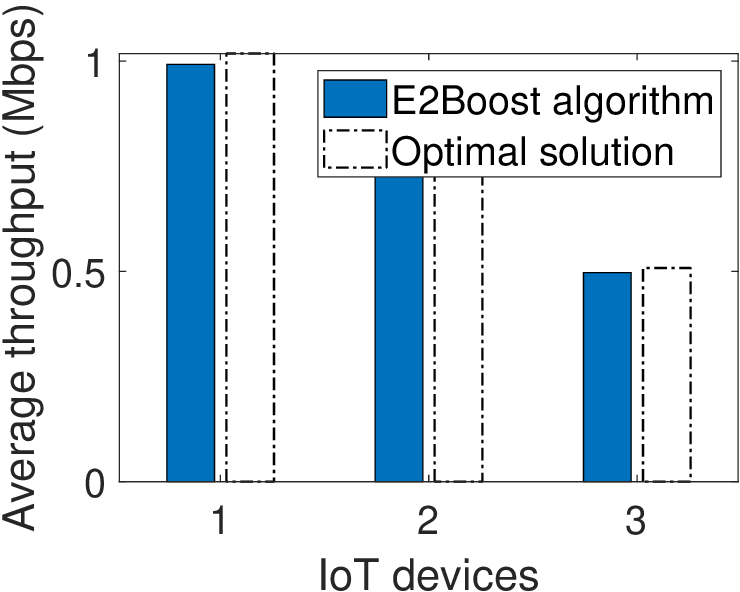}%
\label{TO}}
\hfil
\subfloat[IoT device 1 (Player 1)]{\includegraphics[width=1.4in]{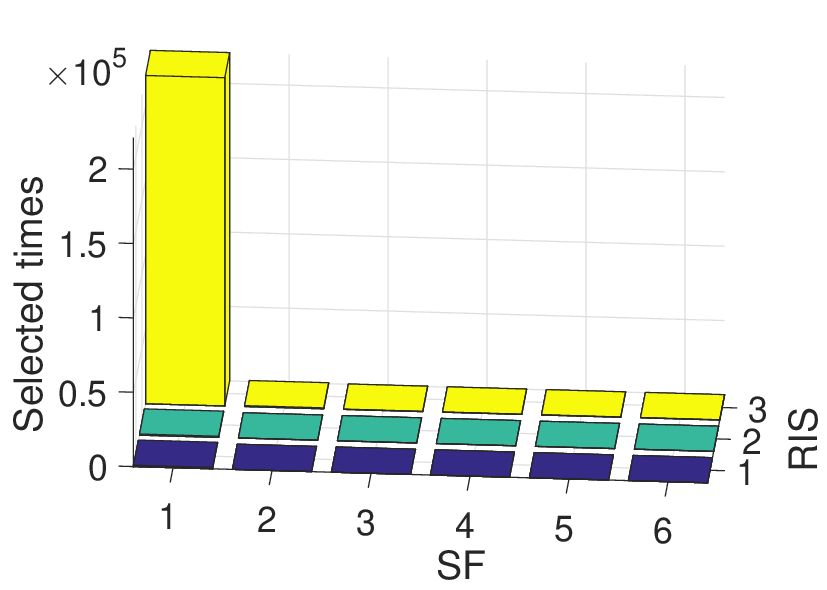}%
\label{SO1}}
\hfil
\subfloat[IoT device 2 (Player 2)]{\includegraphics[width=1.4in]{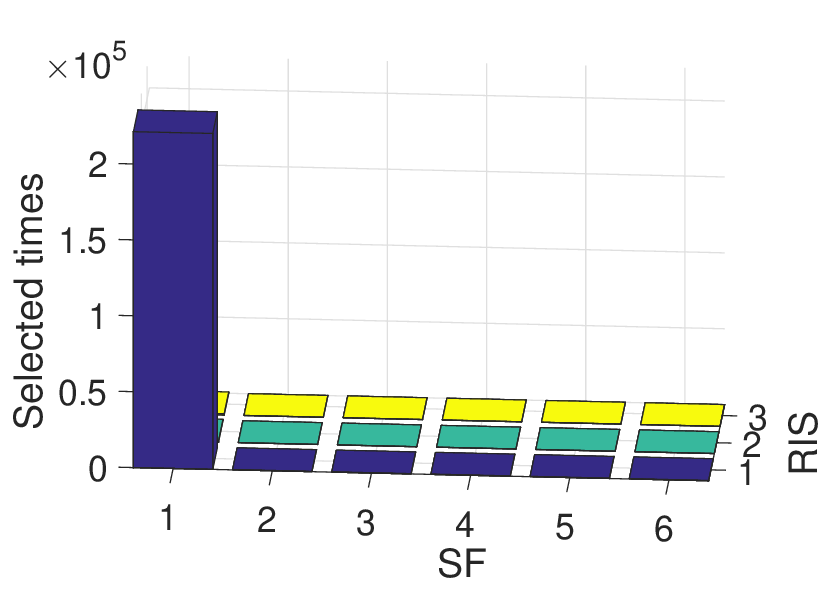}%
\label{SO2}}
\hfil
\subfloat[IoT device 3 (Player 3)]{\includegraphics[width=1.4in]{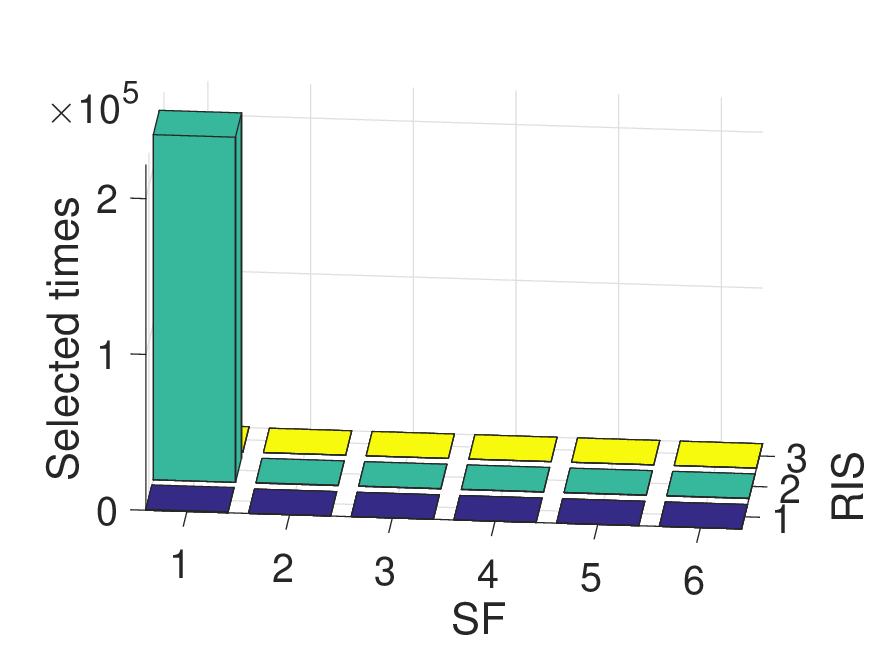}%
\label{SO3}}
\caption{(a) Average throughput of three IoT devices, (b-d) The number of selected times at each arm for each player, by running the E2Boost algorithm with \emph{optimal phase shift} setting in the fixed network scenario Fig. \ref{NS}.}
\label{MultiRate_NumArm01}
\end{figure*}
\setlength{\textfloatsep}{5pt}
\begin{figure*}[!t]
\centering
\subfloat[Throughput]{\includegraphics[width=1.3in]{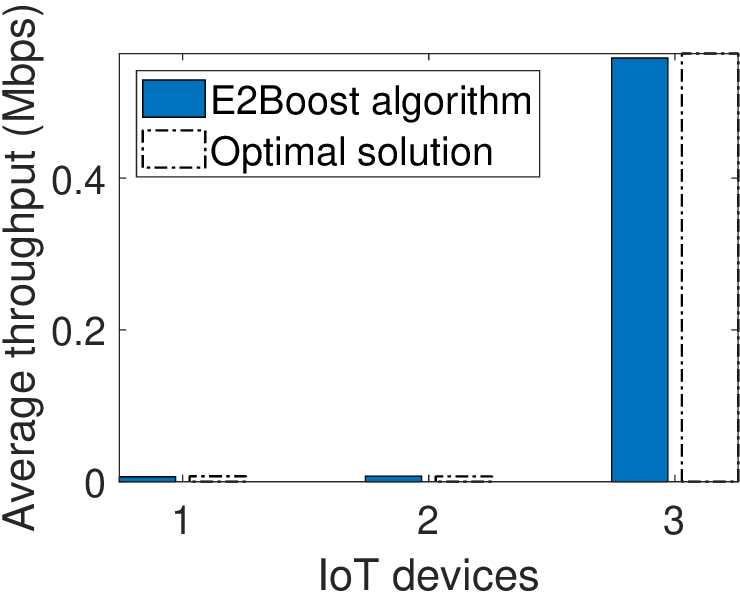}%
\label{TR}}
\hfil
\subfloat[IoT device 1 (Player 1)]{\includegraphics[width=1.4in]{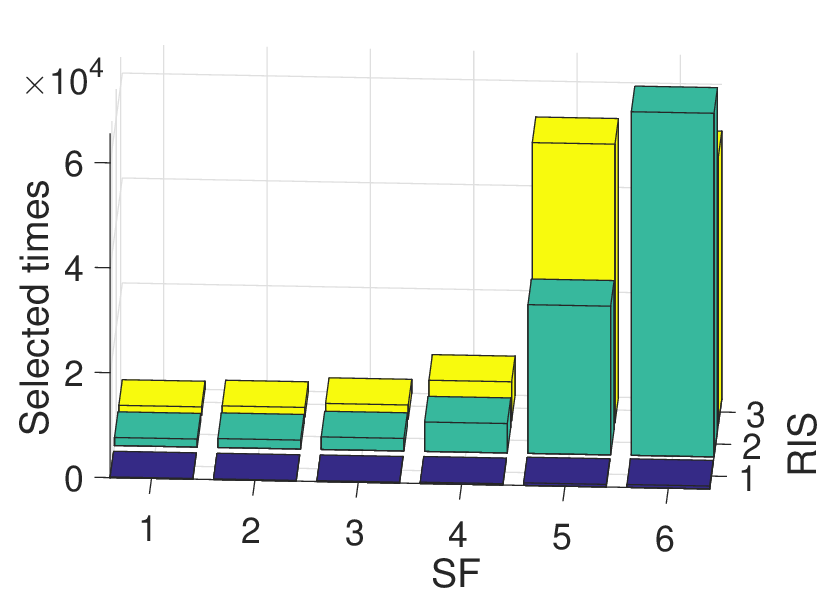}%
\label{SR1}}
\hfil
\subfloat[IoT device 2 (Player 2)]{\includegraphics[width=1.4in]{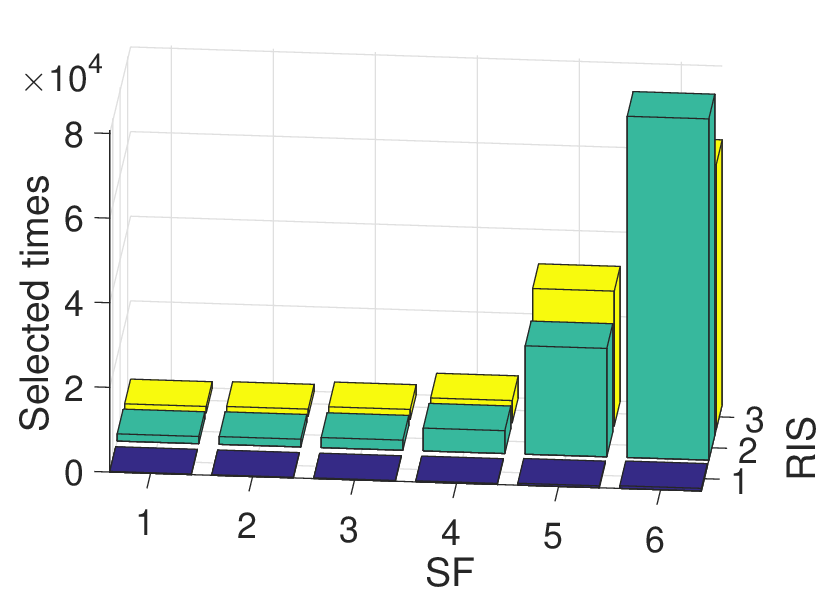}%
\label{SR2}}
\hfil
\subfloat[IoT device 3 (Player 3)]{\includegraphics[width=1.4in]{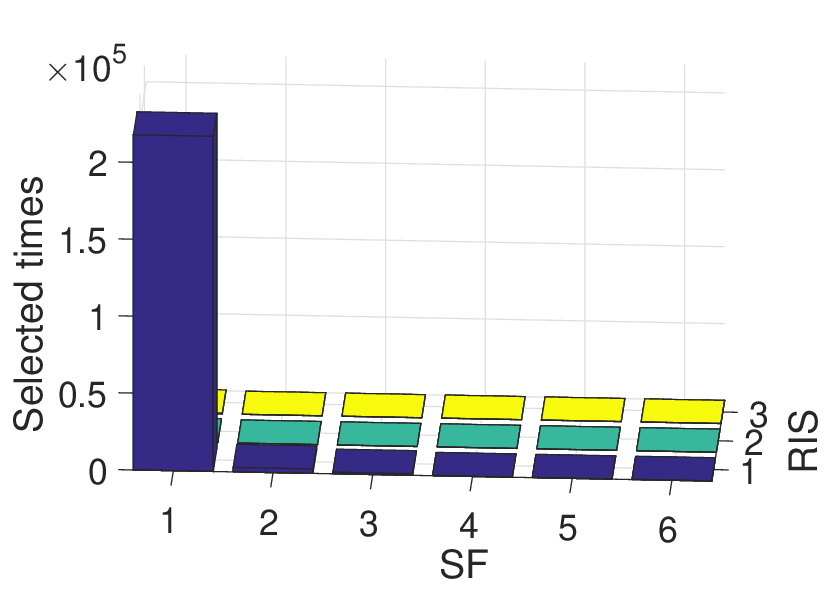}%
\label{SR3}}
\caption{(a) Average throughput of three IoT devices, (b-d) The number of selected times at each arm for each player, by running the E2Boost algorithm with \emph{constant phase shift} setting in the fixed network scenario Fig. \ref{NS}.}
\label{MultiRate_NumArm02}
\end{figure*}

Fig. \ref{MultiRate_Regret} depicts the total pseudo-regret of the E2Boost algorithm, the E2Boost algorithm without TS, and the GoT algorithm in the cases of $\nu_1 = \nu_2 = 1,000$ and  $\nu_1 = \nu_2 = 2,000$, when $Z = 10$ under the \emph{optimal phase shift} setting.
Other parameters are the same as those in Fig. \ref{MultiRate_NumArm01}.
We can see that the proposed algorithm has the lowest expected total pseudo-regret in both cases since
it has a small arm space (i.e., due to the two-stage allocation mechanism) to explore.
In addition, the total pseudo-regrets of all algorithms in the case of $\nu_1 = \nu_2 = 1,000$ are lower than those in the $\nu_1 = \nu_2 = 2,000$ case.
The reason is that a larger value of $\nu_1$ and $\nu_2$ indicates that a longer time is needed to explore all arms, leading to more performance loss.
However, when $\nu_1 = \nu_2 = 1,000$, the GoT algorithm and the E2Boost algorithm without TS will not converge since the value of $\nu_2$ is too small for the second phase to resolve the collisions among IoT devices.
More importantly, Fig. \ref{MultiRate_Regret} validates our theoretical analysis of Theorem \ref{Theorem01}, where the total pseudo-regret of the E2Boost algorithm increases logarithmically \emph{w.r.t.} the time horizon $T$ and is about four times better than the GoT algorithm.
\begin{figure}[!t]
\centering
\includegraphics[width=2.2in]{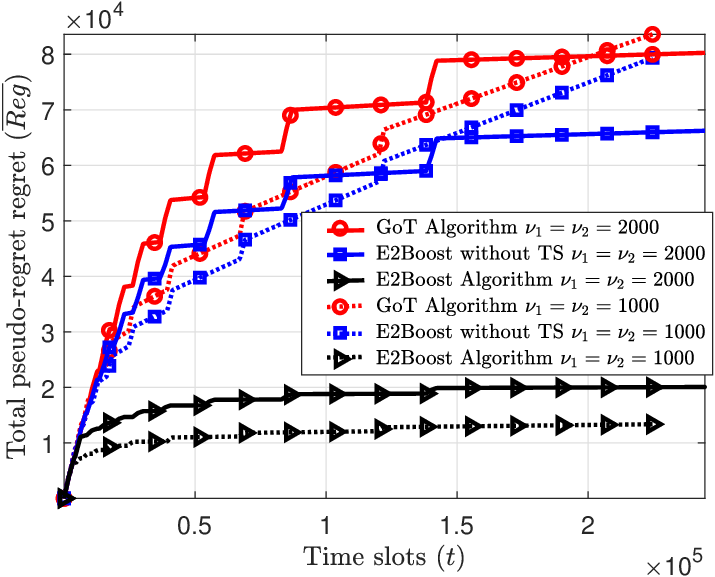}
\caption{The total pseudo-regret via time slot in the cases of $\nu_1 =\nu_2 = 1,000$ and  $\nu_1 = \nu_2 = 2,000$ with \emph{optimal phase shift} setting in the fixed network scenario Fig. \ref{NS}.}
\label{MultiRate_Regret}
\end{figure}

Fig. \ref{MultiRate_Throughput} compares the average total throughput of the E2Boost algorithm, the E2Boost algorithm with $\epsilon=0$ and $\epsilon=1$ (without WD), the E2Boost algorithm without TS, the GoT algorithm, and the random selection method in the \emph{optimal phase shift setting} with $\nu_1 = \nu_2 = 2,000, Z = 10$.
It can be seen that the proposed algorithm outperforms the other algorithms and is close to the optimal solution.
By contrast, the proposed algorithm with $\epsilon=0$ accounts for the worst performance since there is no exploration in the first phase.
Meanwhile, the E2Boost algorithm with $\epsilon=1$ and the E2Boost algorithm without TS is better than the GoT algorithm,
indicating that the E2Boost algorithm with WD can effectively trade off the EE dilemma by sequentially optimizing the parameter $\epsilon$.
More importantly, since the two-stage allocation mechanism, the E2Boost algorithm has a faster convergence rate than the GoT algorithm and the E2Boost algorithm without TS.
\begin{figure}[!t]
\centering
\includegraphics[width=2.2in]{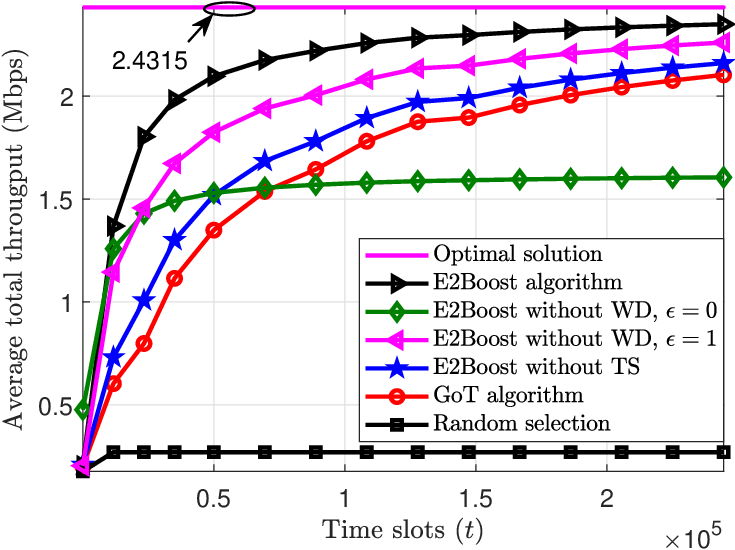}
\caption{The performance of different algorithms versus time slot with \emph{optimal phase shift} setting when $\nu_1 = \nu_2 = 2,000, Z=10$ in a fixed network scenario of Fig. \ref{NS}.}
\label{MultiRate_Throughput}
\end{figure}

Next, we evaluate the impact of the RIS-enabled channel on the performance of the proposed algorithm.
Fig. \ref{PhaseShift_RiceFactor} depicts the performance of the E2Boost algorithm under the optimal and constant phase shift setting for different Rice factors ($\zeta = 0.5, 1, 4, 10$) when $\nu_1 = \nu_2 = 2,000, Z=10$.
We can see that the performance of the \emph{optimal phase shift} setting is much better than the \emph{constant phase shift} setting for different Rice factors.
This is because the optimal phase shift is designed for the centralized UEs.
Thus, IoT devices close to the UEs will also have better performance.
On the other hand, a bigger $\zeta$ will result in a higher average total throughput.
This phenomenon can be explained by \eqref{ChannelModel}, where a big $\zeta$ means that the channel gain is dominated by the LoS component, i.e., the directional reflection link of IoT-RIS-BS.
Therefore, the channel gain is dominated by the RIS when $\zeta$ trends to $+\infty$;
while $\zeta$ trends to $0$ mean that the IoT device only transmits on Pattern \RNum{2}.
\begin{figure}[!t]
\centering
\includegraphics[width=2.2in]{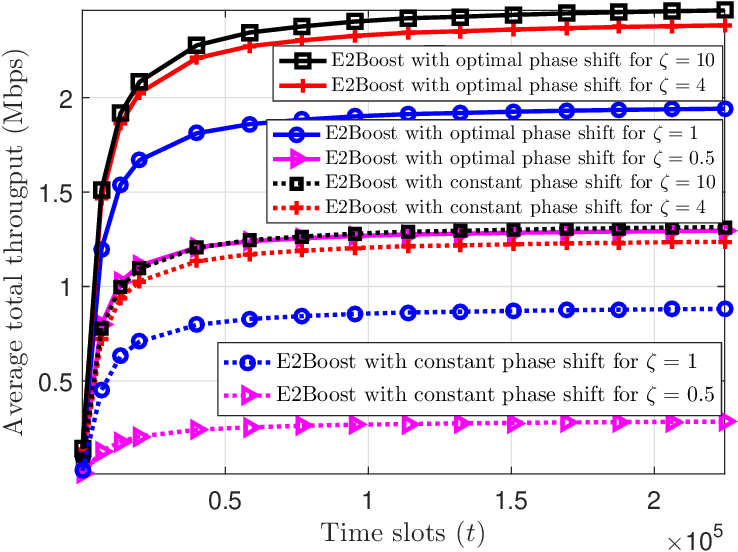}
\caption{The performance of E2Boost algorithm for different phase shift settings and Rice factors in a fixed network scenario of Fig. \ref{NS}.}
\label{PhaseShift_RiceFactor}
\end{figure}

\subsection{Random Network Scenario}
In the following, we evaluate the proposed algorithm under the random network scenario.
At each MC trial, we regenerate the locations of the IoT devices uniformly in the circle area of Fig. \ref{NS}.
Meanwhile, the distance of any two devices is subject to no less than $5$ m.
The locations of RISs, UEs, and BS are set the same as those in Fig. \ref{NS}.

Fig. \ref{Random_AveThro} compares the average total throughput of different algorithms in the\emph{ optimal phase shift} setting with $\nu_1 = \nu_2 = 2,000, Z=11$ over $10^3$ random network scenarios.
It can be observed that the performance of all algorithms except the random selection method increases with time slot $t$.
Again, the E2Boost algorithm has the best performance and a fast convergence rate.
The Q-learning method also exhibits a fast convergence rate, but it suffers from some performance loss due to the lack of the non-cooperation game phase to resolve the collisions among players.
Moreover, the gaps between the optimal solution and these algorithms \com{increase} compared with Fig. \ref{MultiRate_Throughput} in the fixed network case.
The reason is that these algorithms fail to find the optimal RIS for each player under some extreme network scenarios with the constant parameter $\nu_2$ and the limited time horizon $T$.
\begin{figure}[!t]
\centering
\includegraphics[width=2.2in]{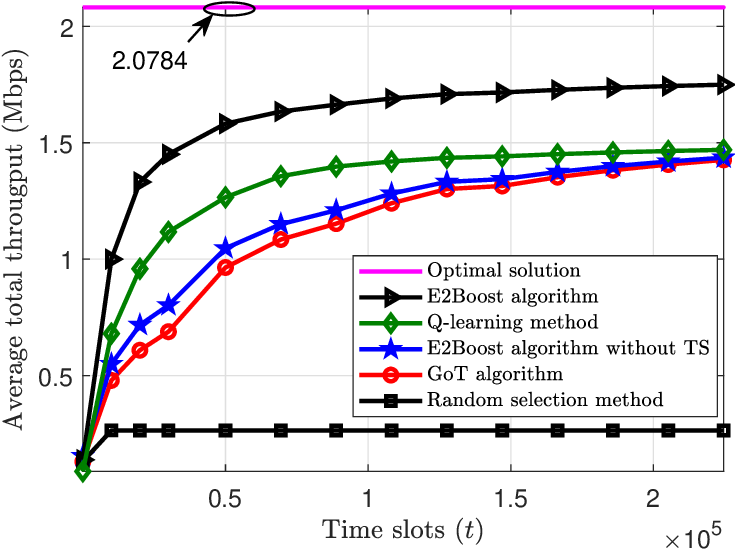}
\caption{The performance of different algorithms versus time slot with \emph{optimal phase shift} setting when $\nu_1 =\nu_2= 2000, Z=10$ over $10^3$ random network scenarios.}
\label{Random_AveThro}
\end{figure}

\begin{figure}[!t]
\centering
\subfloat[A random network scenario] {\label{NS2}
\includegraphics[width=0.47 \columnwidth]{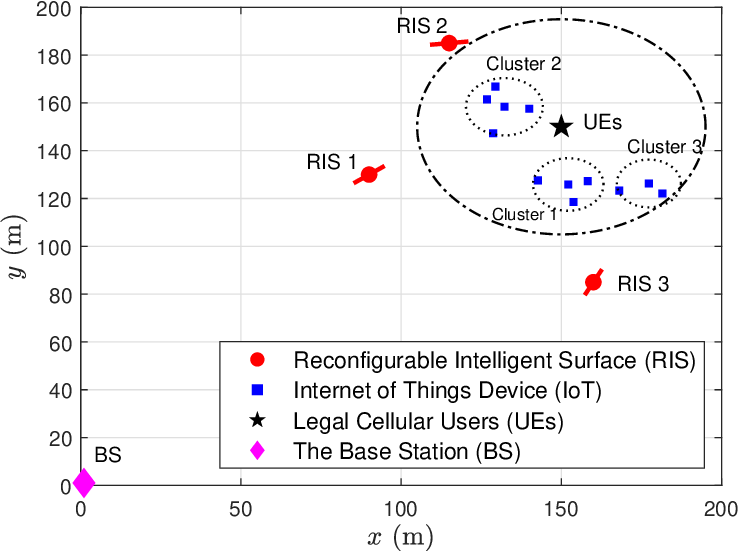}}   
\hfil
\subfloat[Performance comparison at left network scenario] { \label{Random_Cluster}   
\includegraphics[width=0.47 \columnwidth]{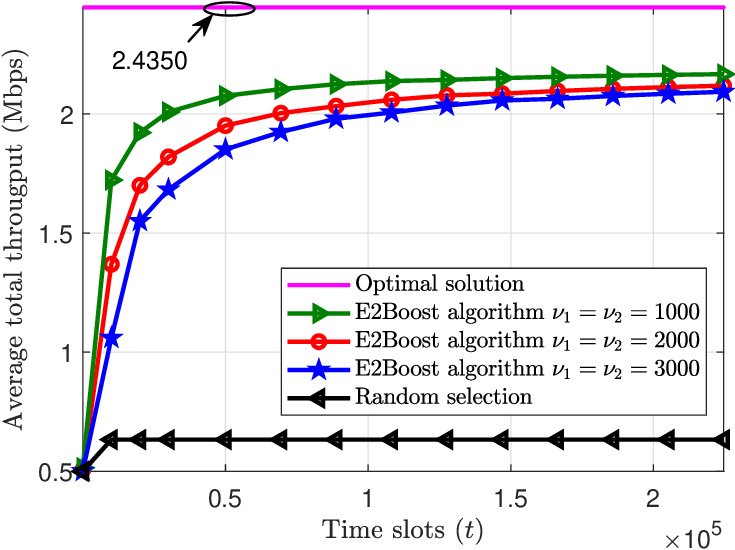}}   
\caption{(a) A random network scenario in a $200\times 200$ m square area with $K=3, N=11$.
(b) The performance of different algorithms versus time slots with \emph{optimal phase shift} setting over $10^3$ random network scenarios.}  
\label{FirstComb} 
\end{figure}
Moreover, we study the performance of the proposed algorithm by considering the case that the number of IoT devices is larger than that of RISs, i.e., $N>K$.
We first generate a new random network scenario,
as shown in Fig. \ref{NS2}, where $N=11$, $K=3$, and the other parameters are the same as those in Fig. \ref{NS}.
We can see from Fig. \ref{NS2} that $N=11$ IoT devices are divided into three clusters
by using the $k$-means clustering method according to their geographic locations.
Fig. \ref{Random_Cluster} compares the performance of the modified E2Boost algorithm (i.e., Algorithm \ref{GoT2}) with different settings of $\nu_1=\nu_2= \{1000, 2000, 3000\}$, and the random selection method in the \emph{optimal phase shift} setting over $10^3$ random network scenarios of  Fig. \ref{NS2}.
It can be seen that the modified E2Boost algorithm with $\nu_1=\nu_2= 1,000$ has the best performance,
and all the algorithms except for the random selection method can converge to the optimal allocation.
Compared with the results in Fig. \ref{Random_AveThro},
the average total throughput in the network scenario of Fig. \ref{NS2} is about $2.4350$ Mbps, which is slightly better than $2.0784$ Mbps in Fig. \ref{NS}.
This demonstrates that, although the number of IoT devices \com{increases},
the performance gain from the non-RIS-assisted transmission pattern is insignificant.

At last, we investigate the influence of the number of IoT devices on the proposed algorithm.
The number of RISs is set to $10$ and is placed on a semicircle with a radius of $55$ m from $3\pi/4$ to $5\pi/4$, as shown in Fig. \ref{NS_RIS10}.
The distances between two neighboring RISs are equal except for the two pairs that are located in the middle and both ends.
\begin{figure}[!t]
\centering
\subfloat[A random network scenario] {\label{NS_RIS10}
\includegraphics[width=0.47 \columnwidth]{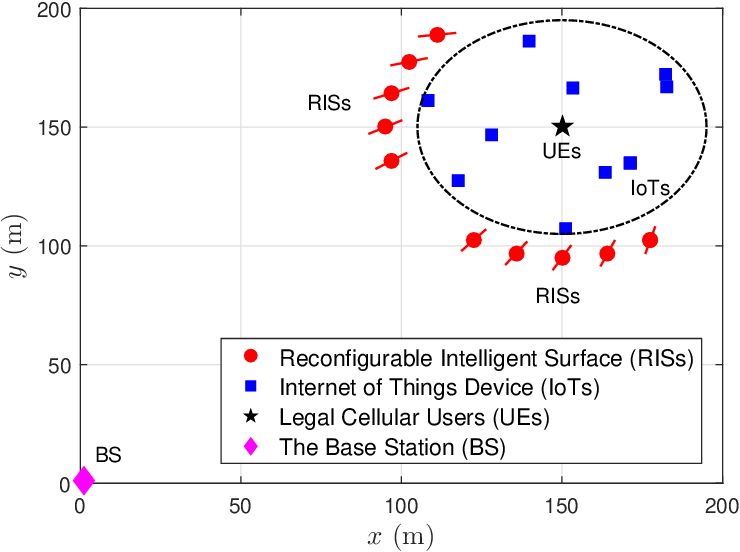}}   
\hfil
\subfloat[Performance comparison at left network scenario] { \label{NumRIS_AveThro}   
\includegraphics[width=0.47 \columnwidth]{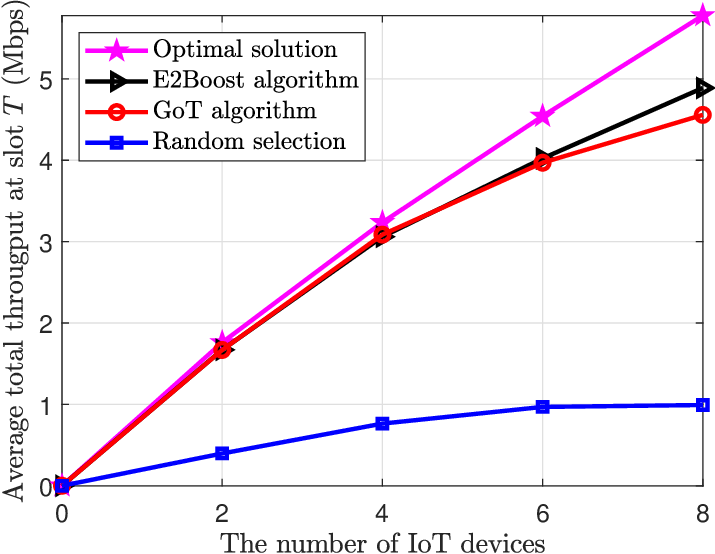}}   
\caption{(a) A random network scenario in a $200$ m $\times$ $200$ m square area with $K = 10$.
(b) The performance of the optimal solution, the E2Boost algorithm, the original GoT algorithm, and the random selection method versus the number of IoT devices at $10^3$ random network scenarios of the left figure.}  
\label{SecondComb} 
\end{figure}
Fig. \ref{NumRIS_AveThro} shows the performance of the optimal solution, the E2Boost algorithm, the original GoT algorithm, and the random selection method versus the number of IoT devices in the \emph{optimal phase shift} setting where $\nu_1 =  \nu_2 = 2,000, K=10, Z=10$ over $10^3$ random network scenarios of Fig. \ref{NS_RIS10}.
It can be seen that the performance of these algorithms increases with the number of IoT devices.
However, the proposed algorithm is better than the GoT algorithm and the random selection method since it has a small arm space to explore.
In addition, the performance gaps between the optimal solution and these algorithms increase with the number of IoT devices.
The reason is that \com{collision probabilities} among players increase with the number of IoT devices, resulting in more performance loss.

\section{Conclusion and Discussion}
This paper studied the resource allocation problem in a RIS-assisted hybrid uplink network where several IoT devices transmit data to the BS.
The objective is to maximize the sum rates of all IoT devices by finding the optimal RIS and SF for each device.
We modeled this problem as a two-stage MPMAB framework, \com{where the first stage is to find the optimal RIS, and the second stage is to find the optimal SF.}
Then, we proposed an E2Boost algorithm to tackle this problem by combining the $\epsilon$-greedy algorithm, TS algorithm, and non-cooperation game method.
Therefore, it can efficiently balance the EE dilemma.
Furthermore, we provided an upper regret bound for the \com{E2Boost} algorithm, i.e., $\mathcal{O}(\log^{1+\delta}_2 T)$,
indicating that the per-round regret will trend to $0$ when $T$ is sufficiently large.
In addition, simulation results demonstrated the effectiveness of the proposed algorithm.
More importantly, it is not sensitive to the joint arm space thanks to the two-stage allocation mechanism,
which can benefit practical applications.

\com{In the system model, we assume that different RISs use different frequencies, and one RIS can serve at most one IoT device.
A more general scenario is that RIS can reuse these frequencies and serve multi-IoT devices.
Then, two interesting problems are how to design a mechanism that allows the UE and IoT device signals to coexist in the same RIS and how to design the RIS-assisted multi-IoT system by estimating the exact information of the RIS and CSI.
These are important yet challenging problems for future study.}

\appendices
\section{Proof of Theorem 1}\label{appendix1}
At each time slot, IoT device \com{transmits on either the Pattern \RNum{1} or} the Pattern \RNum{2}.
For the Pattern \RNum{1}, the total pseudo-regret term $\overline{Reg}^{(1)}$ can be expanded by investigating $\overline{Reg}_z$, where $z$ is the epoch.
Thus, we begin to bound $\overline{Reg}_z$ by computing the probability of event $E_z$,
which is the event that the optimal assignment $\boldsymbol{a}^{\ast}$ is not adopted in \com{the third phase at epoch $z$}.
We have
\begin{equation}\label{AA1}\small
\begin{split}
\mathrm{Pr}(E_z) &= \mathrm{Pr} \left(E^{k^\ast}_z, E^{m^\ast}_z \right) + \mathrm{Pr} \left(E^{k^\ast}_z, \overline{E^{m^\ast}_z} \right) + \mathrm{Pr} \left(\overline{E^{k^\ast}_z}, E^{m^\ast}_z \right)\\
&=\mathrm{Pr} \left(E^{k^\ast}_z\right) + \mathrm{Pr} \left(\overline{E^{k^\ast}_z}, E^{m^\ast}_z \right),
\end{split}
\end{equation}
where $E^{k^\ast}_z$ is the event that the optimal RIS is not used at the end of the  $z$-th epoch of the second phase,
and $E^{m^\ast}_z$ is the event that the optimal SF is not used at the end of the $z$-th epoch of the third phase.

First, we bound the probability that event $E^{k^\ast}_z$ holds, i.e.,
\begin{equation}\label{AA2}
\mathrm{Pr} \left(E^{k^\ast}_z\right) \leq \mathrm{Pr} \left(\bigcup_{j=0}^{\lfloor \frac{z}{2}\rfloor} P_{e,z-j} \right) + P_{g,z},
\end{equation}
where $P_{e,z}$ is the probability that \com{the optimal assignment is different from $\boldsymbol{a}^{\ast}$ in the first phase at epoch $z$,}
and $P_{g,z}$ is the probability that \com{the frequently visited strategy profile is not $\boldsymbol{a}^{\ast}$ in the last $\lfloor z/2\rfloor +1$ non-cooperation game phases}.
Then, we need to calculate the probabilities of $P_{e,z}$ and $P_{g,z}$ before bounding $\overline{Reg}_z$.
In the first phase, we estimate the average successful transmission probabilities $\hat{\theta}_{n,k}$ of all RISs.
Assume i.i.d. rewards $X_{n,k}$ and each player uniformly explores all $K$ RISs when event $E^{k^\ast}_z$ holds.
By adopting the result in \cite{bistritz2018game} (see \emph{Lemma 8}), we have
\begin{equation}\label{A2}
P_{e,z} \leq 2NK e^{-w\nu_1\left(\frac{z}{2}\right)^\delta z} + NK e^{-\frac{\nu_1 \left(\frac{z}{2}\right)^\delta }{36K^2} z},
\end{equation}
where $w$ is a predefined positive constant.
Therefore,
\begin{equation}\label{A3}\small
\begin{split}
\mathrm{Pr} \left(\bigcup_{j=0}^{\lfloor \frac{z}{2}\rfloor} P_{e,z-j} \right)
\leq \frac{2NK e^{-\frac{w}{2}\nu_1\left(\frac{z}{4}\right)^\delta z}} {1-e^{-w\nu_1\left(\frac{z}{4}\right)^\delta}}
+ \frac{NK e^{-\frac{\nu_1 \left(\frac{z}{4}\right)^\delta }{72K^2} z}} {1-e^{-\frac{\nu_1}{36K^2}\left(\frac{z}{2}\right)^\delta}}.
\end{split}
\end{equation}
In the second phase, we investigate the probability that the optimal strategy profile is not visited frequently.
Let $v^{z\ast}=[\boldsymbol{a}^{k\ast}, C^N]$ be the optimal strategy profile in the $z$-th game phase and $F_z(v^{\ast})$ be the number of times the optimal strategy profile has been visited at the last $\lfloor \frac{z}{2}\rfloor +1$ game phases.
According to \cite{bistritz2018game} (see \emph{Lemma 16}), we have
\begin{equation}\label{A4}
F_z (v^{\ast}) \triangleq \sum_{i=z-\lfloor\frac{z}{2}\rfloor}^{z} \sum_{t\in \mathcal{G}_z} \mathbb{I} \left( v(t) = v^{i\ast} \right),  \ \forall k \in \mathcal{K}.
\end{equation}
Denote the stationary distribution of the optimal strategy profile by $\pi_{v^{\ast}} = \min\limits_{z-\lfloor\frac{z}{2}\rfloor\leq j \leq z} \pi_{v^{i\ast}}$.
If $0<\eta<\frac{1}{2}$, then $\pi_{v^{\ast}}>\frac{1}{2(1-\eta)}$ for a sufficiently large $z$, we have
\begin{equation}\label{A5}
\begin{split}
P_{g,z} &\triangleq \mathrm{Pr} \left(F_z (v^{\ast}) \leq \frac{1}{2} \sum_{i=z-\lfloor\frac{z}{2}\rfloor}^{z} \nu_2 i^\delta  \right) \\
&\leq \left( C_0 e^{- \frac{\nu_2 \eta^2}{144T_m(\frac{1}{8})} \left(\pi_{v^{\ast}} - \frac{1}{2(1-\eta)}\right)\left(\frac{z}{2}\right)^\delta} \right)^z,
\end{split}
\end{equation}
where $C_0$ is a constant \com{and} independent of $z, \pi_{v^{\ast}}$ and $\eta$.

Second, we bound the probability that event $\left(\overline{E^{k^\ast}_z}, E^{m^\ast}_z \right)$ holds.
The method is based on the regret analysis of the TS algorithm \cite{agrawal2013further}.
Here, event $\overline{E^{k^\ast}_z}$ means that the player found the optimal RIS at the end of the $z$-th game phase.
Let $P^n_{t,z}$ be the probability that player $n$ fails to find the best SF.
Since players can find the optimal SF in the third phase only when event $\overline{E^{k^\ast}_z}$ holds, we have
\begin{equation}\label{AA4}\small 
\begin{split}
\mathrm{Pr} \left(E^{m^\ast}_z| \overline{E^{k^\ast}_z} \right)
&\triangleq  \sum_{n=1}^{N} \mathrm{Pr} \left( \sum_{m=1,m\neq m^\ast}^{M} \sum_{j=1}^{z} W^j_{n,m} \geq \frac{1}{2} \sum_{i=1}^{z} \nu_3 2^i \right)\\
&\overset{(a)} \leq \sum_{n=1}^{N} 2^{- D_\mathrm{kl}\left( \left(1- \frac{c_m^{\ast}\theta^{\ast}_{n,m}}{\sum_{m=1}^{M} c_m \theta_{n,m}} \right) \| \frac{1}{2} \right)\sum_{i=1}^{z} \nu_3 2^i} \\
&\overset{(b)} \leq \sum_{n=1}^{N} 2^{-2\left(\frac{c_m^{\ast}\theta^{\ast}_{n,m}}{\sum_{m=1}^{M}c_m \theta_{n,m}} - \frac{1}{2}  \right)^2 \left( 2^{z+1} - 2 \right) \nu_3 }\\
&\overset{(c)} \leq N 2^{- \frac{(M-2)^2 \left( 2^{z} - 1 \right)\nu_3}{M^2} },
\end{split}
\end{equation}
where $D_\mathrm{kl}$ is the KL-divergence and $W^j_{n,m^{\ast}}$ is the number of times that the $m$-th suboptimal SF has been selected by player $n$ up to the $j$-th epoch.
Inequality (a) holds by using the large deviation theory \cite{cover2012elements}. 
Inequality (b) follows from Pinsker's inequality, i.e., $D_\mathrm{kl}(p \| q)\geq 2(p-q)^2$.
\com{Inequality (c) holds due to $Mc_m^{\ast}\theta^{\ast}_{n,m} \geq \sum_{m=1}^{M}c_m \theta_{n,m}$,
considering the worst case that each SF has the same probability of being selected.}
Therefore, by using $\mathrm{Pr} \left(\overline{E^{k^\ast}_z}, E^{m^\ast}_z \right) = \mathrm{Pr} \left(E^{m^\ast}_z|\overline{E^{k^\ast}_z}\right) \mathrm{Pr}\left( \overline{E^{k^\ast}_z} \right)$, we have \eqref{eepdfh1} which is given in the top of next page.
\newcounter{MYtempeqncnt3}
\setcounter{MYtempeqncnt3}{\value{equation}}
\setcounter{equation}{39}
\begin{figure*}[!ht]
\normalsize
\begin{equation}\label{eepdfh1}\small
\begin{split}
\mathrm{Pr} \left(\overline{E^{k^\ast}_z}, E^{m^\ast}_z \right)
\leq N 2^{- \frac{(M-2)^2 \left( 2^{z} - 1 \right)\nu_3}{M^2} }
\left(1 - \frac{2NK e^{-\frac{w}{2}\nu_1\left(\frac{z}{4}\right)^\delta z}} {1-e^{-w\nu_1\left(\frac{z}{4}\right)^\delta}}
+ \frac{NK e^{-\frac{\nu_1 \left(\frac{z}{4}\right)^\delta }{72K^2} z}} {1-e^{-\frac{\nu_1}{36K^2}\left(\frac{z}{2}\right)^\delta}}
+ \left( C_0 e^{- \frac{\nu_2 \eta^2}{144T_m(\frac{1}{8})} \left(\pi_{v^{\ast}} - \frac{1}{2(1-\eta)}\right)\left(\frac{z}{2}\right)^\delta} \right)^z
\right).
\end{split}
\end{equation}
\setcounter{equation}{\value{equation}}
\hrulefill
\end{figure*}

Then, we continue to bound $\overline{Reg}_z$ based on \eqref{A3}, \eqref{A5} and \eqref{eepdfh1}.
For $z>z_0$, we have
\begin{equation}\label{A6}
\footnotesize
\begin{split}
&\overline{Reg}_z  \leq N \Gamma_{\max} \nu_2 z^{\delta}  +  \mathrm{Pr}(E_z) N \Gamma_{\max}\nu_1 z^{\delta} +  \mathrm{Pr}(E_z) N\Gamma_{\max}\nu_3 2^z  \\
&\leq N\Gamma_{\max}\nu_2 z^{\delta} + N\Gamma_{\max}\left(\nu_1 z^{\delta} +\nu_3 2^z  \right)
\left( \frac{2NK e^{-\frac{w}{2}\nu_1\left(\frac{z}{4}\right)^\delta z}} {1-e^{-w\nu_1\left(\frac{z}{4}\right)^\delta}} \right.\\
&\left.+ \frac{NK e^{-\frac{\nu_1 \left(\frac{z}{4}\right)^\delta }{72K^2} z}} {1-e^{-\frac{\nu_1}{36K^2}\left(\frac{z}{2}\right)^\delta}}
+\left( C_0 e^{- \frac{\nu_2 \eta^2}{144T_m(\frac{1}{8})} \left(\pi_{v^{\ast}} - \frac{1}{2(1-\eta)}\right)\left(\frac{z}{2}\right)^\delta} \right)^z\right)+ \\
& N^2\Gamma_{\max} \left(\nu_1 z^{\delta} +  \nu_3 2^z\right) 2^{- \frac{(M-2)^2 \left( 2^{z} - 1 \right)\nu_3}{M^2} }
\left(1 - \frac{2NK e^{-\frac{w}{2}\nu_1\left(\frac{z}{4}\right)^\delta z}} {1-e^{-w\nu_1\left(\frac{z}{4}\right)^\delta}} \right.\\
&\left.+ \frac{NK e^{-\frac{\nu_1 \left(\frac{z}{4}\right)^\delta }{72K^2} z}} {1-e^{-\frac{\nu_1}{36K^2}\left(\frac{z}{2}\right)^\delta}}
+ \left( C_0 e^{- \frac{\nu_2 \eta^2}{144T_m(\frac{1}{8})} \left(\pi_{v^{\ast}} - \frac{1}{2(1-\eta)}\right)\left(\frac{z}{2}\right)^\delta} \right)^z \right)\\
&\leq N\Gamma_{\max}\left(\frac{\nu_1}{2}+3NK+\nu_2 \right)z^{\delta} + N\Gamma_{\max}(6NK+1)\nu_3\\
&+N^2\Gamma_{\max} \left(\nu_1 z^\delta + \nu_3 2^{z} \right) 2^{- \nu_3 \left(2^{z}-1\right)}.
\end{split}
\end{equation}
where $\Gamma_{\max} = \max_{n,i} \mu_{n,i}$ is the maximum real expected reward among all players' arms.
The first inequality holds since we consider the worst case that each player contributes the maximum regret $\Gamma_{\max}$.
The second inequality follows by using \eqref{A3} and \eqref{A5}.
The last inequality establishes on the facts that, for $z>z_0$,
\begin{equation}\small
\begin{split}
&\max \left\{C_0 e^{- \frac{\nu_2 \eta^2}{144T_m(\frac{1}{8})} \left(\pi_{v^{\ast}} - \frac{1}{2(1-\eta)}\right)\left(\frac{z}{2}\right)^\delta}, e^{-\frac{w}{2}\nu_1\left(\frac{z}{4}\right)^\delta}, e^{-\frac{\nu_1 \left(\frac{z}{4}\right)^\delta }{72K^2}}   \right\} \\
&<\frac{1}{2}
\end{split}
\end{equation}
and
\begin{equation}
2^{- \frac{(M-2)^2 \left( 2^{z} - 1 \right)\nu_3}{M^2} } \leq 2^{- \nu_3 \left(2^{z}-1\right)}.
\end{equation}

Finally, let $Z$ be the total number of epochs.
The total pseudo-regret $\overline{Reg}^{(1)}$ in Pattern \RNum{1} can be bounded as
\begin{equation}\label{A7}\small
\begin{split}
&\overline{Reg}^{(2)} \overset{(a)}{\leq} \sum_{z=1}^{Z} \overline{Reg}_z \overset{(b)}{\leq} N\Gamma_{\max} \sum_{z=1}^{z_0} \left( (\nu_1+\nu_2) z^{\delta} + \nu_3 2^z \right) \\
&+ N\Gamma_{\max} \sum_{z=z_0+1}^{Z} \left( N\Gamma_{\max}\left(\frac{\nu_1}{2}+3NK+\nu_2 \right)z^{\delta}\right. \\
&\left.+ N\Gamma_{\max}(6NK+1)\nu_3 +N^2\Gamma_{\max} \left(\nu_1 z^\delta + \nu_3 2^{z} \right) 2^{- \nu_3 \left(2^{z}-1\right)} \right) \\
&\overset{(c)}{\leq} N\Gamma_{\max} \sum_{z=1}^{z_0} \nu_3 2^z + N\Gamma_{\max} \sum_{z=1}^{Z}  (\nu_1+\nu_2) z^{\delta} + ZN\Gamma_{\max}(6NK+1)\nu_3 \\
&\overset{(d)}{\leq} N\Gamma_{\max}(\nu_1+\nu_2) \log^{1+\delta}_2 \left(\frac{T}{\nu_3}+2\right)  + N\Gamma_{\max} \nu_3 2^{z_0+1} \\
& + N\Gamma_{\max}(6NK+1)\nu_3 \log_2 \left(\frac{T}{\nu_3}+2\right) = O(\log^{1+\delta}_2 T),
\end{split}
\end{equation}
\com{where (b) holds since \eqref{A6} for $z> z_0$ and the worst case of $\Gamma_{\max}$ pre-round regret for $z\leq z_0$.}
Inequality (d) follows from $\sum_{z=1}^{Z}z^{\delta}\leq Z^{1+\delta}$ and the fact that $T\geq \sum_{z=1}^{Z-1}\nu_3 2^z \geq \nu_3 (2^Z-2)$,
which gives $Z^{1+\delta} \leq \log^{1+\delta}_2\left({T}/{\nu_3}+2\right)$.

For the Pattern \RNum{2}, the total pseudo-regret $\overline{Reg}^{(2)}$ can be bounded according to the regret analysis of the MTS algorithm in \cite{gupta2018low}, i.e.,
\begin{equation}\label{AA0}
\overline{Reg}^{(2)} \leq P_a (1+\varpi) \sum_{n=1}^{N} \sum_{a_n\in \mathcal{M}} \frac{\log_2 T}{\mathrm{D}_\mathrm{KL}(a_n, a_n^{\ast})} \Delta_{n,a_n},
\end{equation}
where $\varpi \in (0,1)$ and $\mathrm{D}_\mathrm{KL}(\cdot)$ is the KL-divergence.
Term $P_a$ is the active probability of the \com{UE}.

To sum up, the total pseudo-regret $\overline{Reg}$ of Algorithm \ref{GoT} is given by
\begin{equation}\label{A8}
\begin{split}
\overline{Reg} = &\overline{Reg}^{(1)}+ \overline{Reg}^{(2)} \\
\leq &N\Gamma_{\max} (1-P_a) \left( 2(\nu_1 +\nu_2) \log^{1+\delta}_2 \left(\frac{T}{\nu_3}+2\right)\right.\\
&\left.+(6NK+1)\nu_3 \log_2 \left(\frac{T}{\nu_3}+2\right)\right) \\
&+  P_a (1+\varpi) \sum_{n=1}^{N} \sum_{a_n\in \mathcal{M}} \frac{\log_2 T}{\mathrm{D}_\mathrm{KL}(a_n, a_n^{\ast})} \Delta_{n,a_n},
\end{split}
\end{equation}

\section{RIS's Direction and Element's Location}\label{appendix2}
We first determine the direction of the RIS in $XY$-plane by computing the angle $\angle \varphi$ between $X$-axis and the RIS, as shown in Fig. \ref{RS}.
Given the coordinates of $\mathrm{B} = \left(x_\mathrm{B}, y_\mathrm{B}\right)$, $\mathrm{R}=\left(x_\mathrm{R}, y_\mathrm{R}\right)$, $\mathrm{U}=\left(x_\mathrm{U}, y_\mathrm{U}\right)$,
we have two vectors $\overrightarrow{\mathrm{RB}} = \left(x_\mathrm{B} - x_\mathrm{R}, y_\mathrm{B}-y_\mathrm{R}\right)$ and $\overrightarrow{\mathrm{RU}} = \left(x_\mathrm{U}-x_\mathrm{R}, y_\mathrm{U}-y_\mathrm{R}\right)$.
According to the plane analytical geometry theory, we can obtain the direction vector $\overrightarrow{\mathrm{RC}}$, i.e., the bisector of angle $\angle \text{BRU}$, as
\begin{enumerate}
  \item If $\cos\langle \overrightarrow{\mathrm{RB}}, \overrightarrow{\mathrm{RU}} \rangle \geq 0$, then
        \begin{equation*}
        \begin{split}
        \overrightarrow{\mathrm{RC}} &= \left(x_{\mathrm{RC}}, y_{\mathrm{RC}} \right)  = -\frac{\overrightarrow{\mathrm{RB}}}{|\overrightarrow{\mathrm{RB}}|} + \frac{\overrightarrow{\mathrm{RU}}}{|\overrightarrow{\mathrm{RU}}|} \\
        & = \left(\frac{x_\mathrm{R} - x_\mathrm{B}}{|\overrightarrow{\mathrm{RB}}|}+ \frac{x_\mathrm{U} - x_\mathrm{R}}{|\overrightarrow{\mathrm{RU}}|}, \frac{y_\mathrm{R} - y_\mathrm{B}}{|\overrightarrow{\mathrm{RB}}|}+ \frac{y_\mathrm{U} - y_\mathrm{R}}{|\overrightarrow{\mathrm{RU}}|}  \right);
        \end{split}
        \end{equation*}
  \item If $\cos\langle \overrightarrow{\mathrm{RB}}, \overrightarrow{\mathrm{RU}} \rangle < 0$, then
        \begin{equation*}
        \begin{split}
        \overrightarrow{\mathrm{RC}} &= \left(x_{\mathrm{RC}}, y_{\mathrm{RC}} \right)  = \frac{\overrightarrow{\mathrm{RB}}}{|\overrightarrow{\mathrm{RB}}|} + \frac{\overrightarrow{\mathrm{RU}}}{|\overrightarrow{\mathrm{RU}}|} \\
        & = \left(\frac{x_\mathrm{B} - x_\mathrm{R}}{|\overrightarrow{\mathrm{RB}}|}+ \frac{x_\mathrm{U} - x_\mathrm{R}}{|\overrightarrow{\mathrm{RU}}|}, \frac{y_\mathrm{B} - y_\mathrm{R}}{|\overrightarrow{\mathrm{RB}}|}+ \frac{y_\mathrm{U} - y_\mathrm{R}}{|\overrightarrow{\mathrm{RU}}|}  \right).
        \end{split}
        \end{equation*}
\end{enumerate}
Thus, the direction of the RIS in $XY$-plane (i.e., the normal vector $\overrightarrow{\mathrm{AR}}$ of line RC) is $\overrightarrow{\mathrm{AR}} = \left(-y_{\mathrm{RC}},  x_{\mathrm{RC}}\right)$.
It is easy to obtain the angle $\angle \varphi$ by
\begin{equation}
\angle \varphi = -\arctan \left(\frac{x_{\mathrm{RC}}}{y_{\mathrm{RC}}} \right).
\end{equation}

Next, based on the angle $\angle \varphi$, we can compute the location of each element in the RIS, i.e.,
\begin{equation}\label{Locations}
\renewcommand{\arraystretch}{1.2}
\begin{split}
\left\{
\begin{array}{ll}
x(l_1, l_2) = \left(l_1 - 51\right) d_v \cos \angle \varphi + x_\mathrm{R},  \\
y(l_1, l_2) = \left(l_1 - 51\right) d_v \sin \angle \varphi + y_\mathrm{R},  \\
z(l_1, l_2) = \left(l_2 - 51\right) d_h  + 10,
\end{array} \right.
\end{split}
\end{equation}
where $d_v=d_h=0.01$ are the offsets in RIS's horizontal and vertical planes, respectively.
Constant $51$ is the $51$-th row or column elements in the RIS and constant $10$ is the height of the RIS.
Symbol $(l_1, l_2)$ are the integers in $[0,101]$, standing for the index ceil of the RIS elements' matrix.

\balance
\bibliographystyle{IEEEtran}
\bibliography{Reference_MPMAB}
\end{document}